\newcolumntype{L}{>{\centering\arraybackslash}m{3cm}}
\numberwithin{equation}{section}
\crefname{lemma}{Lemma}{Lemmas}
\crefname{proposition}{Proposition}{Propositions}
\crefname{definition}{Definition}{Definitions}
\crefname{theorem}{Theorem}{Theorems}
\crefname{conjecture}{Conjecture}{Conjectures}
\crefname{corollary}{Corollary}{Corollaries}
\crefname{example}{Example}{Examples}
\crefname{section}{Section}{Sections}
\crefname{appendix}{Appendix}{Appendices}
\crefname{figure}{Fig.}{Figs.}
\crefname{equation}{Eq.}{Eqs.}
\crefname{table}{Table}{Tables}
\crefname{item}{Property}{Properties}
\crefname{remark}{Remark}{Remarks}
\crefname{problem}{}{}
\newtheorem{theorem}{Theorem}
\newtheorem{definition}{Definition}
\newtheorem{lemma}{Lemma}
\newtheorem*{remark*}{Remark}
\renewenvironment{abstract}
 { \small
  \list{}{
    \setlength{\leftmargin}{6mm}
    \setlength{\rightmargin}{6mm}%
  }%
  \item\relax}
 {\endlist}
\newcommand\prob\textsc
\newcommand{\problemtitle}[1]{\gdef\@problemtitle{#1}}
\newcommand{\probleminput}[1]{\gdef\@probleminput{#1}}
\newcommand{\problemquestion}[1]{\gdef\@problemquestion{#1}}
\newcommand{\problempromise}[1]{\gdef\@problempromise{#1}}
  \par\addvspace{.5\baselineskip}
\newcommand{\C}{{\mathbb{C}}} 
\newcommand{\R}{{\mathbb{R}}}
\newcommand{\N}{{\mathbb{N}}}
\newcommand{\id}{{\mathbbm{1}}} 
\newcommand{\ee}{\mathrm{e}}
\newcommand{\ii}{\mathrm{i}}
\newcommand{\ww}{\mathrm{w}} 
\newcommand{\CNOT}{\mathop\text{CNOT}}     
\newcommand{\pbold}{\mathbf{p}}
\renewcommand{\O}{\mathcal{O}}
\newcommand{\norm}[1]{\|{#1}\|}
\newcommand{\Abs}[1]{\left|{#1}\right|}
\newcommand{\tr}{\mathop{}\text{Tr}}
\newcommand{\Tr}[2][]{\tr{%
	\ifthenelse{\isempty{#1}}{}{_{#1}}}%
	\left({#2}\right)}
\newcommand{\ket}[1]{\left | #1 \right \rangle}
\newcommand{\bra}[1]{\left \langle #1 \right |}
\newcommand{\braket}[1]{\left\langle #1 \right\rangle}
\newcommand{\ketbra}[2]{\ket{#1}\!\!\bra{#2}}
\newcommand{\proj}[1]{\ketbra{{#1}}{{#1}}}
\DeclareMathOperator{\poly}{poly}
\newcommand*{\defeq}{:=}
\newcommand{\parahead}[1]{\noindent\textbf{{#1}}}
\title{Sublinear quantum algorithms for estimating von Neumann entropy}
\author{
\large{Tom Gur \thanks{University of Warwick Email: \texttt{tom.gur@warwick.ac.uk}.}}
\and 
\large{Min-Hsiu Hsieh} \thanks{Hon Hai (Foxconn) Quantum Computing Research Center. Email: \texttt{min-hsiu.hsieh@foxconn.com}.}
\and 
\large{Sathyawageeswar Subramanian} \thanks{University of Warwick. Email: \texttt{sathya.subramanian@warwick.ac.uk}.\newline\noindent T.\ Gur and S.\ Subramanian are supported by the UKRI Future Leaders Fellowship MR/S031545/1.}
}
\date{}
\begin{document}

\maketitle

\begin{abstract}
\begin{center}
    \textbf{Abstract}
\end{center}
    Entropy is a fundamental property of both classical and quantum systems, spanning myriad theoretical and practical applications in physics and computer science.
    We study the problem of obtaining estimates to within a multiplicative factor $\gamma>1$ of the Shannon entropy of probability distributions and the von Neumann entropy of mixed quantum states. Our main results are:
    
    \begin{itemize}
        \item  an $\widetilde{\O}\left( n^{\frac{1+\eta}{2\gamma^2}}\right)$-query quantum algorithm that outputs a $\gamma$-multiplicative approximation of the Shannon entropy $H(\pbold)$ of a classical probability distribution $\pbold = (p_1,\ldots,p_n)$; 
        
        \item  an $\widetilde{\O}\left( n^{\frac12+\frac{1+\eta}{2\gamma^2}}\right)$-query quantum algorithm that outputs a $\gamma$-multiplicative approximation of the von Neumann entropy $S(\rho)$ of a density matrix $\rho\in\C^{n\times n}$.
    \end{itemize}
    In both cases, the input is assumed to have entropy bounded away from zero by a quantity determined by the parameter $\eta>0$, since, as we prove, no polynomial query algorithm can multiplicatively approximate the entropy of distributions with arbitrarily low entropy. In addition, we provide $\Omega\left(n^{\nicefrac{1}{3\gamma^2}}\right)$ lower bounds on the query complexity of $\gamma$-multiplicative estimation of Shannon and von Neumann entropies.
    We work with the quantum purified query access model, which can handle both classical probability distributions and mixed quantum states, and is the most general input model considered in the literature.
\end{abstract}

\clearpage
\tableofcontents
\clearpage

\section{Introduction}
\label{sec:intro}
Entropy as a scientific concept is central in the study of a vast variety of subjects, ranging from thermodynamics and information theory to networks and quantum entanglement. The notion of entropy mathematically measures the amount of disorder and uncertainty in a system composed of many parts. Indeed, the second law of thermodynamics can famously be encapsulated in the simple statement that the entropy of a closed system can never decrease.

In this work we focus on the most fundamental entropic functionals for both classical and quantum objects. For classical systems, the Shannon entropy $H(\pbold) :=\linebreak[4] -\sum_{i\in[n]} p_i \log p_i$ of a probability distribution $\pbold = (p_1,\ldots,p_n)$ is a cornerstone of information theory. Notably, the Shannon entropy is proportional to the rates at which input data can be transmitted over communication channels \cite{shannon}. 

For quantum systems, the von Neumann entropy $S(\rho):=-\Tr{\rho\log\rho}$ \cite{Neumann1927,PetzOnNeumann} of a mixed state specified by its density matrix $\rho\in\C^{n\times n}$ is pivotal to our understanding of key properties in quantum mechanics, such as the amount of entanglement contained in bipartite quantum systems. In terms of information theory, von Neumann entropy gives an asymptotic lower bound for the rate at which quantum data can be compressed in a noiseless fashion \cite{PhysRevA.51.2738}. 

The von Neumann entropy is a strict generalisation of the Shannon entropy and reduces to the latter when viewed in appropriately restricted settings. Other entropic functionals built on the von Neumann entropy are also widely used in characterising quantum systems. Moreover, they arise extensively in condensed matter and high energy physics \cite{Laflorencie2016}, and are often used as operational measures in quantum information-processing tasks \cite{Konig2009,WinterCoherent}. They have also had immense theoretical implications in the theory of gravity and black holes \cite{Bekenstein1973,Dong2016}, and their study from a quantum information-theoretic viewpoint continues to be a rich source of new physical insights \cite{Azuma18,Azuma20}.

Since obtaining a perfect description of a system is typically impossible, estimating the entropy of an unknown probability distribution or quantum state using a bounded number of samples, queries, or measurements is a vital algorithmic task. This question received much attention in classical information theory, and in a series of works \cite{Batu2002,WuYang15additive,Jiao15additive,ValiantValiant2011} spanning the last two decades, nearly tight bounds were shown on the sample complexity of classical algorithms for estimating Shannon entropy.

In this paper, we study quantum algorithms for the problem of obtaining multiplicative estimates of the Shannon entropy of classical probability distributions and the von Neumann entropy of mixed quantum states. For generality, we focus on algorithms in the quantum purified query access model, which can handle both classical probability distributions and mixed quantum states, and is the most general input model considered in the literature. Thus, our algorithms can be implemented in all the four major input models for quantum algorithms accessing probability distributions: quantum samples, quantum queries to frequency vectors and classically drawn samples, as well as purifications.

Approximation algorithms that estimate a quantity to within a multiplicative factor, i.e., estimating $x>0$ by outputting $\tilde{x} \in [\nicefrac{x}{\gamma}, \gamma x]$ for some $\gamma>1$, allow for much flexibility. On the one hand, via a correct choice of parameters they allow us to recover additive approximations, i.e., estimating $x>0$ by outputting $\tilde{x} \in [x-\epsilon,x+\epsilon]$ for a small precision parameter $\epsilon\in(0,1)$ (see \cref{sec:applications}). On the other hand, in natural settings of parameters, they also allow for a greater slack than additive approximation (for instance, in many applications we may only need to know the unknown quantity to within a factor of two, i.e., $\gamma=2$). This slack allows us to obtain far more efficient algorithms than is possible with additive approximation; indeed, we attain sublinear complexity for entropy estimation as opposed to polynomial complexity (which is the best that can be achieved for additive estimates). In turn, these properties of multiplicative approximation algorithms make them often more involved and harder to construct.

In the classical setting, \citeauthor{Batu2002} \cite{Batu2002} considered the problem of estimating Shannon entropy to multiplicative precision, showing that $\widetilde{\O}(n^{\nicefrac{(1+\eta)}{\gamma^2}})$ samples suffice to obtain an estimate within a factor $\gamma$, for classical distributions $p$ with $H(p)>\nicefrac{\gamma}{\eta}$. This is almost matched by a lower bound of $\Omega(n^{\nicefrac{(1-\eta)}{\gamma^2}})$ later proven in \cite{Valiant2011}. 

We build on the aforementioned line of work by extending it to the setting of quantum query algorithms for both classical distributions and mixed quantum states. In particular, we are motivated by the following question:

\begin{center}
    \emph{Is it possible to construct sublinear quantum algorithms\\ for estimating von Neumann entropy?}
\end{center}

\subsection{Main results}
This paper answers the foregoing question in the affirmative. We begin by presenting our construction of quantum query algorithms for estimating the Shannon entropy of a probability distribution to within a multiplicative factor $\gamma>1$. Then, we proceed to estimating the von Neumann entropy of mixed quantum states. Finally, we show lower bounds on the quantum query complexity of the foregoing problems. The query complexity in our setting is the standard analogue to the classical sample complexity, and our results are in this sense information theoretic in nature. 
\subsubsection{Estimating Shannon entropy}
We first consider quantum algorithms for classical distributions, accessed via quantum query oracles. We obtain a quadratic improvement in the information theoretic complexity over the best possible classical algorithm.
\begin{theorem}
\label{thm:class}
    There is a quantum algorithm that outputs with high probability a $\gamma$-multiplicative approximation of the Shannon entropy $H(\pbold)$ of a classical probability distribution $\pbold$ on $[n]$ accessed via a purified quantum query oracle $U_{\pbold}$ as in \cref{eq:purified-access-class} using $U_{\pbold}$ and its inverse $\widetilde{\O}\left( n^{\nicefrac{(1+\eta)}{2\gamma^2}}\right)$ times, provided $H(\pbold)>3\gamma+\nicefrac{4}{\eta}$. 
\end{theorem}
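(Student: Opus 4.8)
The plan is to quantize the classical bucketing strategy of \citeauthor{Batu2002}~\cite{Batu2002}, replacing classical sampling by quantum amplitude estimation wherever a quadratic speedup is available. Write $s_i := \log(1/p_i)$, so that $H(\pbold) = \mathbb{E}_{i\sim\pbold}[s_i]$, and recall that one application of $U_\pbold$ followed by a measurement of the index register returns a sample $i\sim\pbold$. Fix the threshold $\tau := n^{-(1+\eta)/\gamma^{2}}$ and set $\beta := (1+\eta)/\gamma^{2} = \log(1/\tau)/\log n$, so the target query complexity is $\widetilde{\O}(\tau^{-1/2})$. First I would split
\[
  H(\pbold) \;=\; \underbrace{\mathbb{E}_{i\sim\pbold}\big[s_i\,\mathbbm{1}[p_i\ge\tau]\big]}_{=:\,A}\;+\;\underbrace{\mathbb{E}_{i\sim\pbold}\big[s_i\,\mathbbm{1}[p_i<\tau]\big]}_{=:\,B}
\]
and estimate the ``light'' part $B$ and the ``heavy'' part $A$ separately.

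For $B$, let $q := \Pr_{i\sim\pbold}[p_i<\tau]$. Every index contributing to $B$ has $s_i > \beta\log n$, so $B \ge \beta q\log n$; and because the domain has only $n$ elements, the maximum-entropy bound gives $B \le q\log(n/q) \le q\log n + \O(1)$. A short case analysis then shows that $B$ is always either negligible or recoverable within a factor $\gamma$ from $q$ alone: if $\beta>1$ then $q\le n^{1-\beta}$ forces $B=o(1)$; if $\beta\le 1$ and $q<1/\mathrm{polylog}(n)$ then again $B=o(1)$; and otherwise $B$ lies in an interval whose endpoints differ by a factor $1/\beta+o(1)=\gamma^{2}/(1+\eta)+o(1)$, so the geometric mean $\widehat B := \widehat q\,\sqrt{\beta\log n\cdot\log(n/\widehat q)}$ is a $\tfrac{\gamma}{\sqrt{1+\eta}}(1+o(1))$-multiplicative estimate of $B$. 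To produce $\widehat q$ I would estimate, via amplitude estimation, the probability of the event ``the index register holds some $i$ with $p_i<\tau$'', implementing the membership test ``$p_i<\tau$'' by amplitude estimation of $p_i$ itself on a doubling schedule truncated after $\widetilde{\O}(\tau^{-1/2})$ applications of $U_\pbold$ (declaring $p_i<\tau$ if $p_i$ has not been resolved above $\tau$ by then). This costs $\widetilde{\O}(\tau^{-1/2}/\sqrt q)$ queries, hence $\widetilde{\O}(\tau^{-1/2})$ since any smaller $q$ makes $B$ negligible anyway.

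For $A$, note that $s_i\le\beta\log n$ throughout the support of the indicator, so I would estimate $A$ by an empirical mean. Draw $M$ independent samples $i\sim\pbold$; for each run amplitude estimation on a doubling schedule, truncated at $\widetilde{\O}(\tau^{-1/2})$ applications, to obtain $\widehat p_i$ with $|\widehat p_i-p_i|\le\delta\, p_i$ at cost $\O\!\big(1/(\delta\sqrt{p_i})\big)\le\widetilde{\O}(\tau^{-1/2})$ (a draw not resolved above $\tau$ contributes $0$); then return $\tfrac1M\sum_j \log(1/\widehat p_{i_j})\,\mathbbm{1}[\widehat p_{i_j}\ge\tau]$. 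Each summand lies in $[0,\beta\log n]$ with mean $A$ and variance at most $\beta\log n\cdot A$, so Bernstein's inequality gives a $(1\pm\delta)$-multiplicative estimate with $M=\widetilde{\O}\big(\log n/(\delta^{2} A)\big)$ when $A=\Omega(1)$, while a fixed $M=\mathrm{polylog}(n)$ already forces the additive error below $o(1)$ when $A$ is small; the relative error $\delta$ on $p_i$ turns into a mere additive $\O(\delta)$ error in $s_i=\log(1/p_i)$. Multiplying the two counts, the heavy part --- and therefore the whole algorithm --- uses $\widetilde{\O}(\tau^{-1/2})=\widetilde{\O}\big(n^{(1+\eta)/(2\gamma^{2})}\big)$ applications of $U_\pbold$ and $U_\pbold^{-1}$.

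Finally, output $\widehat H := \widehat A+\widehat B$. With $\delta$ taken small enough, $\widehat A\in[(1-\delta)A-\O(\delta),\,(1+\delta)A+\O(\delta)]$ and $\widehat B\in[\tfrac{\sqrt{1+\eta}}{\gamma}B-o(1),\,\tfrac{\gamma}{\sqrt{1+\eta}}B+o(1)]$, and a short computation --- using that the $\Theta(\eta)$ multiplicative gap between $\gamma$ and $\gamma/\sqrt{1+\eta}$ dominates the accumulated $o(1)$ and $\O(\delta)$ terms exactly when $H(\pbold)>3\gamma+\tfrac{4}{\eta}$ --- places $\widehat H$ in $[H(\pbold)/\gamma,\,\gamma H(\pbold)]$; this is the only step that invokes the hypothesis on the entropy. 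I expect the main obstacle to be the quantum estimation machinery behind $\widehat q$ and the $\widehat p_i$: nesting an amplitude-estimation membership test inside the outer estimator while (i) keeping the total number of oracle calls at $\widetilde{\O}(\tau^{-1/2})$, which requires the data-dependent per-call cost $\O(p_i^{-1/2})$ to be tamed by the truncation at $\tau$; (ii) making the ``unknown-amplitude'' doubling schedule actually deliver the claimed multiplicative guarantee on each $p_i$; and (iii) union-bounding the failure probabilities of the polynomially many inner calls, which forces $\delta$ and the truncation lengths to carry extra polylogarithmic factors.
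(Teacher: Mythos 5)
Your route is genuinely different from the paper's. The paper never samples indices: both pieces are estimated coherently. Its light part matches yours in spirit (a QSVE-based threshold flag at precision equal to the square root of the threshold, followed by amplitude estimation of the flagged weight, \cref{alg:light}), but its heavy part does not estimate $\log(1/p_i)$ per sample; it replaces $\log(1/x)$ by the power-function approximation $-\frac{x^{a}-x^{-a}}{2a}$ with $a=\log\gamma/\log(1/\beta)$ and estimates the two power sums $\sum_{i\in B}p_i^{1\pm a}$ via QSVT together with amplitude estimation (\cref{alg:Heavy}, \cref{app:pow-funcs}). There the whole $\gamma$-slack is spent on the analytic approximation of the logarithm, the threshold stays at $n^{-\nicefrac{1}{\gamma^2}}$, and the $(1+\eta)$ in the exponent comes from rescaling $\gamma$ to absorb additive errors (\cref{thm:main} and the remark following it). You spend no slack on the heavy side --- classically sampled indices plus per-index amplitude estimation give a $(1\pm\delta)$ estimate of $A$, essentially a quantisation of Batu et al.'s original estimator --- and instead move the threshold to $n^{-\nicefrac{(1+\eta)}{\gamma^2}}$ so that the geometric-mean interpolation on the light side is a $\gamma/\sqrt{1+\eta}$-approximation; your case analysis for $\beta>1$ correctly disposes of the regime where that interpolation would not make sense, and the query count $\widetilde{\O}\bigl(n^{\nicefrac{(1+\eta)}{2\gamma^2}}\bigr)$ comes out as claimed.

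Two concrete points keep this from being a proof as written. First, the membership test ``$p_i<\tau$'' inside the estimation of $\widehat q$ must run coherently over $i$, and an adaptive doubling schedule with a data-dependent stopping rule is a classically controlled procedure with no coherent analogue; what actually works (and what the paper uses, \cref{app:phase-est}) is fixed-precision phase/singular-value estimation at precision $\sqrt{\tau}$, boosted by median amplification, at cost $\widetilde{\O}(\tau^{-1/2})$ per coherent call --- your truncation gestures at this, but the guarantee you invoke is the adaptive one. Second, and more substantively, your heavy and light subroutines classify threshold-adjacent indices by two \emph{different} procedures (a measured test $\widehat p_i\ge\tau$ versus the coherent flag), and nothing forces their acceptance probabilities to be complementary on the fuzzy band $p_i\in[\tau/2,2\tau]$. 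Take a distribution with essentially all mass on elements with $p_i\approx\tau$ (possible whenever $\gamma^2\ge 1+\eta$, and compatible with $H(\pbold)>3\gamma+\nicefrac{4}{\eta}$): if both tests accept such elements with probability near one, the band is double-counted and $\widehat H\approx\bigl(1+\gamma/\sqrt{1+\eta}\bigr)H$, which exceeds $\gamma H$ for, say, $\gamma=1.05$, $\eta=0.05$. You need a single classifier shared by both estimators (the paper uses the same QSVE flag in \cref{alg:light,alg:Heavy} and builds the heavy-side polynomial to remain valid slightly below the threshold precisely to tolerate the residual blur), or an explicit argument that each unit of mass is counted exactly once in expectation. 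With those repairs --- and an actual verification, rather than an assertion, that your particular $\O(\delta)$ and $o(1)$ terms are absorbed under $H(\pbold)>3\gamma+\nicefrac{4}{\eta}$ --- your approach goes through.
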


We remark that there are four popular quantum input models that have been studied for quantum algorithms accessing classical input probability distributions \cite{Bravyi2011,chakraborty18Testing,Montanaro2013,Li2019RenyiQuery,Belovs2019,Gilyen2019Distributional}, namely:
(i) quantum query oracle to a frequency vector,
(ii) quantum query oracle to list of classically generated samples, 
(iii) quantum samples with preparation oracle, and
(iv) purified quantum query access.
(See formal definitions in \cref{sec:prelim-ests}.)
We stress that \cref{thm:class} holds for all of the models above.

In more detail, Belovs \cite{Belovs2019} initiated a comparative study of the aforementioned four models and showed that the purified access model (i.e., Model (iv)) is the most general in the sense that it can capture both classical distributions and density matrices, as well as be emulated by all the other models with a constant overhead. Furthermore, he proved that the quantum samples model  (i.e., Model (iii)) is strictly stronger than the rest of the models, and conjectured that Models (i), (ii), and (iv) are equivalent for classical probability distributions. Our algorithms are constructed using purified query oracles, and hence by the above, the upper bounds we prove automatically apply to the rest of the models. 


\subsubsection{Estimating von Neumann entropy}
Next, we proceed to look at quantum algorithms for mixed quantum states. As far as we are aware, this work is the first to investigate multiplicative approximations of the von Neumann entropy of density matrices. We prove the following theorem showing that such approximation is possible with query complexity that is \emph{sublinear} in the dimension of the state.

\begin{theorem}
\label{thm:quant}
    There is a quantum algorithm that outputs a $\gamma$-multiplicative approximation of the von Neumann entropy $S(\rho)$ of a density matrix $\rho\in\C^{n\times n}$ accessed via a purified quantum query oracle $U_{\rho}$ as in \cref{eq:purified-access-quant} using $U_{\rho}$ and its inverse $\widetilde{\O}\left( n^{\nicefrac12+\nicefrac{(1+\eta)}{2\gamma^2}}\right)$ times, provided $S(\rho)>3\gamma+\nicefrac{4}{\eta}$. 
\end{theorem}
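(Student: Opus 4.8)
The plan is to reduce \cref{thm:quant} to \cref{thm:class} via the spectral identity $S(\rho) = H(\lambda)$, where $\lambda = (\lambda_1,\dots,\lambda_n)$ is the list of eigenvalues of $\rho$. Since the Shannon entropy is invariant under relabeling the alphabet, it suffices to use the purified oracle $U_\rho$ to simulate a purified query oracle for \emph{some} classical distribution equal to $\lambda$ up to a permutation (and up to a small error that perturbs the entropy only negligibly), and then invoke the algorithm of \cref{thm:class} on that simulated oracle. If one query to such a ``spectral oracle'' costs $\widetilde\O(\sqrt n)$ queries to $U_\rho$ and $U_\rho^{-1}$, the overall query complexity becomes $\widetilde\O(\sqrt n) \cdot \widetilde\O(n^{(1+\eta)/(2\gamma^2)}) = \widetilde\O(n^{1/2 + (1+\eta)/(2\gamma^2)})$, and the hypothesis $S(\rho) > 3\gamma + 4/\eta$ supplies exactly the entropy lower bound required by \cref{thm:class} for the simulated distribution.

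The nontrivial point is that a purification of $\rho$ is \emph{not} a purified oracle for $\lambda$: measuring a register of $U_\rho\ket 0$ in the computational basis returns the diagonal entries of $\rho$, whose Shannon entropy generically far exceeds $S(\rho)$. One therefore has to (approximately) rotate into the eigenbasis of $\rho$. I would do this by building a block-encoding of $\rho$ from $U_\rho$ and using Hamiltonian simulation / density-matrix exponentiation together with quantum phase estimation to map, coherently, each eigenvector to a register recording an estimate of its eigenvalue; amplitude estimation is then used exactly as in the classical-input algorithm to read off the relevant bucket masses of the spectrum. This spectral-extraction step is where the extra $\sqrt n$ enters: it replaces the direct sampling available in the classical-input setting, and amplitude amplification/estimation over the $n$-dimensional spectrum (where the relevant success amplitudes are of order $1/\sqrt n$) contributes precisely a $\sqrt n$ factor, independently of $\gamma$ and $\eta$.

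The main obstacle, and most of the work, is twofold. First, near-degenerate eigenvalues: if the phase-estimation step merges two eigenvalues $\lambda,\lambda$ into a single value of mass $2\lambda$, the entropy estimate is corrupted (for small $\lambda$ this replaces $2\lambda\log(1/\lambda)$ by $2\lambda\log(1/2\lambda)$, and merging across a large degenerate block is catastrophic), so one must label eigenvectors finely enough — e.g.\ by combining the phase-estimate register with an additional measurement, or by a small symmetry-breaking perturbation $\rho_\epsilon$ with $|S(\rho_\epsilon) - S(\rho)|$ negligible — while keeping the phase-estimation precision cheap enough to stay within the $\widetilde\O(\sqrt n \cdot n^{(1+\eta)/(2\gamma^2)})$ budget. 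Second, error propagation: the block-encoding, the polynomial approximations, and the finite-precision estimations each perturb the simulated spectral distribution, and one must show via Lipschitz-type continuity of Shannon entropy (and the same bucketing/truncation analysis used for \cref{thm:class}) that the cumulative effect changes $H$ by at most a $1+o(1)$ factor, so that the output is still a $\gamma$-multiplicative estimate. Once these are handled, \cref{thm:class} is applied as a black box and \cref{thm:quant} follows.
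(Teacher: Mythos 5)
There is a genuine gap. Your whole reduction hinges on the claim that one query to a purified oracle for the spectral distribution $\lambda$ can be simulated with $\widetilde\O(\sqrt n)$ uses of $U_\rho$, but you never construct such a simulation, and the obstruction you yourself flag — degenerate or near-degenerate eigenvalues — is fatal to it rather than a technicality. A purified oracle for $\lambda$ must coherently attach a \emph{label} to each eigenvector, consistently across all uses and inverses of the simulated unitary (your black-box call to \cref{thm:class} runs QSVE/QSVT/QAE on it, so it must be one fixed, reversible map). Inside a degenerate eigenspace there is no canonical labelling at all, and phase estimation at any precision affordable within your budget merges near-equal eigenvalues; for, e.g., a maximally mixed block of dimension $k$ this pools mass $k\lambda$ onto one label and shifts the entropy by $k\lambda\log k$, up to $\Theta(\log n)$. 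Your two proposed fixes do not close this: measuring an extra (say computational-basis) register produces the diagonal distribution rather than the spectrum (for a pure state $\proj{+}^{\otimes\log n}$ it reports entropy $\log n$ instead of $0$), and you cannot apply a symmetry-breaking perturbation to an unknown state you only have oracle access to — and even a perturbation of a known block encoding leaves gaps of order $1/\poly(n)$ that would force phase-estimation cost far beyond $\widetilde\O(\sqrt n)$ per query. So the step ``simulate $U_\lambda$, then apply \cref{thm:class}'' is exactly the part that is missing.

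The paper avoids this issue entirely, which is why its proof is short: \cref{thm:main} is stated for a projected unitary encoding and its algorithm never needs to label individual eigenvectors — it only estimates threshold-aggregated, basis-independent quantities (the total weight of the light part via QSVE+QAE, and power sums $\sum_{i\in B}p_i^{1\pm a}$ of the heavy part via QSVT+QSVE+QAE), all of which are insensitive to degeneracies. For a density matrix, the purified oracle of \cref{eq:purified-access-quant} yields a projected unitary encoding whose singular values are $\sqrt{p_i/n}$, i.e.\ normalisation $\alpha=\sqrt n$ (\cref{app:bloc-enc}), and the $n^{1/2}$ in \cref{thm:quant} is precisely this $\alpha$ propagating linearly through the QSVE/QSVT precision requirements — not the cost of simulating spectral samples, which is the mechanism you posit. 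If you want to salvage your outline, you would have to either prove a gap-independent coherent spectral-labelling primitive (which the degeneracy argument above suggests cannot exist as stated) or abandon the black-box reduction and, like the paper, show that the estimator itself only needs aggregate spectral information.
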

To the best of our knowledge, this is the first example of an algorithm for estimating von Neumann entropy with sublinear complexity. In contrast, we remark that standard (tomographic) methods for learning the state, obtaining an additive estimate, or even \textit{testing} properties of its spectrum \cite{ODonnell2018Testing} typically require a number of samples or queries that scales linearly (for additive approximation) or even quadratically (for learning and testing) in the dimension $n$ of the system. We provide a detailed comparison of our algorithms with related works in \cref{sec:relatedwork}.

\subsubsection{Lower bounds}
We complement the foregoing upper bounds by proving lower bounds on the query complexity of quantum algorithms for multiplicative entropy estimation. For the general purified query access model (i.e., Model (iv)), in which we also show our upper bounds, we show that $\Omega\left(n^{\nicefrac{1}{3\gamma^2}}\right)$ uses of the quantum query oracle are necessary to $\gamma$-approximate the entropy of an unknown classical distribution, even when we are promised that the input has entropy larger than $\nicefrac{\log n}{\gamma^2}$. In fact, the aforementioned lower bound is proved via a reduction to a variant of the collision problem \cite{AaronsonShi04collisions} in the frequency vector model (i.e., Model (i)), and so it also holds for this stronger model. See details in \cref{sec:lb_freq}.

We also prove lower bounds in the quantum samples model (i.e., Model (iii)). This model is far stronger than the rest of the models, and in particular, it trivially admits $O(1)$-quantum-sample algorithms for problems such as uniformity testing, identity testing, and gap-support size testing, which are known to be hard in the other models. In the quantum samples model, we are able to prove a weak lower bound of $\Omega(\sqrt{\log n})$ by a reduction to the promise problem of testing identity of two known distributions in Hellinger distance \cite{Belovs2019}. To our knowledge, this constitutes the first non-trivial lower bound on the capability of this powerful input model. See details in \cref{sec:lb_qsample}.

\subsubsection{Additive approximation and gap problems}
\label{sec:applications}
Multiplicative approximation can generally also capture the notion of additive approximation, and in particular for entropies we can recover $\epsilon$-additive estimates by suitably choosing the multiplicative factor $\gamma$, while incurring only a small (logarithmic) overhead in the complexity. Multiplicative estimation has the added advantage of being closely related to the field of property testing, wherein we wish to test whether an input satisfies some global property (such as having high entropy) or is far from any possible input that has that property (say, in total variation distance). To illustrate the generality and utility of our results, we note the following immediate applications to additive estimation and testing.

\parahead{Estimating the entropy to additive precision: }Since both Shannon and von Neumann entropies of $n$-dimensional distributions or quantum systems are bounded by $\log n$, choosing  $\gamma=1+\frac{\epsilon}{\log n}$ we see that a good $\gamma$-multiplicative approximation also yields a good $\epsilon$-additive approximation (see \cref{sec:prelim-ests} for more details). Furthermore, the complexity overhead can be bounded by noting that
\begin{align}
    \frac{1}{2\gamma^2} &= \frac12\left(1+\frac{\epsilon}{\log n}\right)^{-2}\nonumber\\
        &\leq \frac12 + \frac{3\epsilon^2}{\log^2 n}.
\end{align}
Since the second term decays and is $o(1)$, we recover query complexities of $\O\left(n^{\frac12+o(1)}\right)$ and $\O\left(n^{1+o(1)}\right)$ for estimating the Shannon entropy of a probability distribution or the von Neumann entropy of a density matrix to constant additive precision. This matches the results of \cite{Gilyen2019Distributional}, upto to polylogarithmic factors (or equivalently, $n^{o(1)}$ factors). \\

\parahead{Testing whether the entropy is high or low: }
Suppose we wish to determine if a distribution on $[n]$ has entropy (1) larger than a threshold $H_1$, or (2) smaller than a threshold $H_2<H_1$ for some $H_1, H_2 \in(0,\log n]$. If we are able to $\gamma$-approximate the entropy with $\gamma=\sqrt{\frac{H_1}{H_2}}$, notice that in the first case the algorithm must output a value larger than $\sqrt{H_1 H_2}$, whereas in the second case it must output a value smaller than $\sqrt{H_1 H_2}$, hence allowing us to distinguish the two cases. Thus, we can solve this testing problem with nearly subquadratic quantum query complexity $\widetilde{\O}\left(n^{\frac{H_2}{2H_1}}\right)$. To compare, classical algorithms can solve this task with $\O\left(n^{\frac{H_2}{H_1}+o(1)}\right)$ samples and require $\Omega\left(n^{\frac{H_2}{H_1}-o(1)}\right)$ samples \cite{Valiant2011}.


\subsection{Related work}
\label{sec:relatedwork}
For easy reference, we collect in \cref{tab:class-comp,tab:quant-comp} the best known results on estimating Shannon and von Neumann entropies in input models of relevance to our work.

\begin{table}[htb]
    \renewcommand{\arraystretch}{1.5}
    \centering
    \begin{adjustbox}{center}
    \begin{tabular}{|c|c|c|}
        \toprule
          \textbf{Type of estimate} & \textbf{Classical sample complexity} & \textbf{Quantum query complexity} \\ \midrule
          & {\tiny \cite{Jiao15additive,WuYang15additive}} & {\tiny \cite{Gilyen2019Distributional,Li2019RenyiQuery}~\&~\cite{Bun2018}}\\
         $\epsilon$-Additive  & $\Theta\left(\frac{n}{\epsilon \log n} + \frac{\log^2 n}{\epsilon^2}\right)$  & $\widetilde{\O}\left(\frac{\sqrt{n}}{\epsilon^{1.5}}\right)$ ~\&~ $\widetilde{\Omega}\left(\sqrt{n}\right)$  \\
         \specialrule{0.25pt}{5pt}{5pt}
         $\gamma$-Multiplicative & $\widetilde{\O}\left(n^{\frac{1+\eta}{\gamma^2}}\right)$~ \& ~ $\Omega\left(n^{\nicefrac{1}{\gamma^2}-o(1)}\right)$ & $\widetilde{\O}\left(n^{\frac{1+\eta}{2\gamma^2}}\right)$ \&~ $\Omega\left(n^{\nicefrac{1}{3\gamma^2}}\right)$\\
          & {\tiny \cite{Batu2002}$\qquad$\&$\qquad$\cite{Valiant2011}}& {\tiny (this work)}\\
         \bottomrule
    \end{tabular}
    \end{adjustbox}
    \caption{Classical and quantum sample and query complexities of estimating the Shannon entropy of classical distributions over an alphabet of size $n$, and $\eta>0$ controls the amount by which the entropy of the input is bounded away from zero (see \cref{thm:main} for details).
    }
    \label{tab:class-comp}
\end{table}

\begin{table}[htb]
    \renewcommand{\arraystretch}{1.5}
    \centering
    \begin{adjustbox}{center}
    \begin{tabular}{|l|c|L|c|}
        \toprule
           & \textbf{Type of estimate} & \textbf{Input model} & \textbf{Complexity} \\ \midrule
          
          \cite{Acharya2017QuantumAdditive} & $\epsilon$-Additive  & Copies of $\rho$ & $\O\left(\frac{n^2}{\epsilon^2}\right) ~\&~~ \Omega\left(\frac{n^2}{\epsilon}\right)$ \\
         \specialrule{0.25pt}{5pt}{5pt}
         \cite{Gilyen2019Distributional} & $\epsilon$-Additive & Purified quantum queries & $\widetilde{\O}\left(\frac{n}{\epsilon^{1.5}}\right)$  \\
         \specialrule{0.25pt}{5pt}{5pt}
         This work & $\gamma$-Multiplicative & Purified quantum queries & $\widetilde{\O}\left(n^{\nicefrac12+\nicefrac{(1+\eta)}{2\gamma^2}}\right)$ \& $\Omega\left(n^{\nicefrac{1}{3\gamma^2}}\right)$\\
         \bottomrule
    \end{tabular}
    \end{adjustbox}
    \caption{Comparing works on estimating the von Neumann entropy of an $n$-dimensional density matrix.
    }
    \label{tab:quant-comp}
\end{table}

We can group studies of entropy estimation into four categories:
(1) classical and 
(2) quantum algorithms for estimating entropies of classical distributions;
(3) classical and (4) quantum algorithms for estimating the entropies of quantum states. 

We have already seen the most relevant works of the first kind in \cref{sec:intro}; it is worth remarking however that the estimation of entropies in a variety of classical input models and computational settings continues to be an active area of research.
    
We only note studies of the third category in passing:  \cite{Hastings2010MeasuringSimulations}, for instance, discuss a quantum Monte Carlo method to measure the $2$-R\'enyi entropy of a many-body system by evaluating the expectation value of a unitary swap operator.

At the intersection of categories (2) and (4), \cite{Acharya2017QuantumAdditive} study the sample complexity of estimating von Neumann and Renyi entropies of mixed states of quantum systems, in an input model where one gets access to $m$ independent copies of an unknown $n$-dimensional density matrix $\rho$. They allow arbitrary quantum measurements and classical post-processing, and show that in general the number of quantum samples required scales as $\Theta(n^2)$, which is asymptotically the same as the number of samples that would be required to learn the state completely via tomography. The experimental measurement of the entropy of certain kinds of quantum systems has also recently been studied, for example in \cite{Islam2015}. 
    
Other oracular input models may be potentially stronger than simple samples with measurement. The purified quantum query access model that we study in this work, wherein data is accessed in the form of a quantum state, is one such model. This state may be the output of some other quantum subroutine, in which case that subroutine itself is the oracle. Such input models can capture the fact that we have access to the process generating the unknown state, which we may \textit{a priori} expect to be useful in reducing the effort required in estimating its properties.
    
With regard to quantum algorithms for estimating the entropies of quantum states (which subsumes the case of classical probability distributions), \cite{Li2019RenyiQuery} provide upper and lower bounds on the query complexity for the task of additive approximation of von Neumann and Renyi entropies in the quantum frequency vector input model (see \cref{eq:freq-vec}). \cite{Gilyen2019Distributional} study another similar oracular model, known as the quantum purified query access model, which essentially provides a pure state, sampling from which reproduces the statistics of the original mixed state, or target classical distribution (see \cref{eq:purified-access-class,eq:purified-access-quant}). \cite{PhysRevA.104.022428} consider the estimation of Renyi entropies in the same purified query access model, to both additive and multiplicative precision. Their focus however is on how this task may be solved on restricted models of quantum computation (namely, $\mathsf{DQC1}$), and they do not obtain optimal query complexities. 
    
Finally, we remark that we use the approximation of the logarithm by power functions that is defined and studied in \cite{Zhao07}, who show that the Shannon entropy can be estimated to any desired precision by interpolation using estimates of R\'{e}nyi-$\alpha$ entropies for values of $\alpha\in(0,2]$. They study a streaming input model and use techniques that are otherwise very different from ours.


\section{Preliminaries and notation}
We assume the reader is familiar with the quantum computing framework and notation, such as Dirac's bra-ket notation. We refer to standard texts such as \cite{Nielsen2010,deWolf2019notes} for a detailed introduction to quantum computation. Here, we discuss concepts and notation of specific relevance to this paper.

For $n\in\N$ we denote by $[n]$ the set $\{1,\ldots,n\}$. All logarithms that we use throughout this paper are taken with base $2$. For a probability distribution $\pbold=(p_1,\ldots,p_n)$ on $[n]$, we use the notation $\ww_{\pbold}(A):=\sum_{i\in A}p_i$ for the weight of a subset of labels $A\subseteq [n]$.

The Shannon entropy $H$ of $\pbold$ is defined by \cite{shannon}
\begin{equation}
    \label{eq:shannon_entropy}
    H(\pbold) \defeq -\sum_{i\in [n]} p_i \log p_i.
\end{equation}
We will let $H_{\pbold}(A):=-\sum_{i\in A}p_i\log p_i$ denote the entropy of the set of labels $A\subseteq[n]$ under the distribution $\pbold$. 

Quantum registers can also exist in probabilistic mixtures of states; the simpler superposition states are called pure states, and their probabilistic mixtures are known as mixed states. The most general description of an $n$-dimensional quantum state $\rho$ is in terms of an $n\times n$ positive semi-definite matrix with complex entries, normalised to have unit trace. The von Neumann entropy of a quantum state represented by its density matrix $\rho\in\C^{n\times n}$ is defined by \cite{MllerLennert2013}
\begin{equation}
    \label{eq:vonNeumann_entropy}
    S(\rho)\defeq-\Tr{\rho\log\rho}.
\end{equation}
For two distributions $\pbold$ and $\tilde{\pbold}$ on $[n]$, we define the Hellinger distance between them by
\begin{equation}
    \label{def:Hellinger}
    d_H(\pbold,\tilde{\pbold})\defeq \sqrt{\frac12\sum_i (\sqrt{p_i}-\sqrt{\tilde{p}_i})^2},
\end{equation}
and the total variation distance by
\begin{equation}
    \label{def:TV_distance}
    d_{\mathrm{TV}}(\pbold,\tilde{\pbold})\defeq \frac12\sum_i\Abs{ p_i-\tilde{p}_i}.
\end{equation}
We use the standard complexity theoretic notation of $\widetilde{\O}$ to hide polylogarithmic factors. Finally, all our algorithms succeed with constant probability which can be boosted by standard techniques, and we omit the resulting factors from the complexity and discussion for brevity.
\subsection{Multiplicative and additive estimates}
\label{sec:prelim-ests}
A good estimate $\tilde{x}$ of some unknown quantity $x>0$ to a multiplicative factor $\gamma>1$ satisfies
\begin{equation}
    \label{eq:multi_estimate}
    \frac{x}{\gamma}\leq \tilde{x}\leq \gamma x.
\end{equation} 
Similarly, an estimate $\tilde{x}$ of an unknown quantity $x$ to additive precision $\epsilon>0$ must satisfy
\begin{equation}
    \label{eq:add-estimate}
    x-\epsilon\leq \tilde{x}\leq x+\epsilon.
\end{equation} 
When we know an upper bound $0<X<\infty$ on $x$, we can obtain an $\epsilon$-additive approximation from a $\gamma$-multiplicative approximation by choosing $\gamma=1+\frac{\epsilon}{X}$, since 
\[
    x-\epsilon < \left(1-\frac{\epsilon}{X}\right) x < \left(1+\frac{\epsilon}{X}\right)^{-1} x < \tilde{x} < \left(1+\frac{\epsilon}{X}\right) x < x + \epsilon.
\]
In particular, we know that the entropy of any distribution over $[n]$ is bounded above by $\log n$, and so by choosing $\gamma=1+\frac{\epsilon}{\log n}$ we can always obtain good additive estimates using multiplicative estimation subroutines.

\subsection{Input models}
\label{sec:input-models}
We now formally define the four input models touched upon in \cref{sec:intro}. We refer to \cite{Belovs2019} for a more detailed discussion of these models and their relations to each other.
        \begin{enumerate}
            \item[(i)] \textbf{Frequency vectors with quantum query access: }A standard unitary quantum query oracle $U$ to a string $\mathbf{v}\in[n]^{m}$ for some large $m$, where $\forall~j\in[m]$, $i\in[n]$, and
            \begin{align}
                \label{eq:freq-vec}
                U\ket{i}\ket{0} &= \ket{i}\ket{v_i}\nonumber\\
                p_i &= \frac1n\Abs{\{j|v_j=i\}}.
            \end{align}
            
            \item[(ii)] \textbf{Quantum oracle that generates a sample from $\pbold$: }A standard unitary quantum query oracle to a string $\mathbf{v}\in[n]^{m}$ for some large $m$, where each $v_i$ is drawn independently at random according to the distribution $\pbold$.
            
            \item[(iii)] \textbf{Quantum samples for classical distributions}: A unitary that prepares the state
            \begin{align}
                U_{\pbold}\ket{0} = \sum_{i=0}^{d-1} \sqrt{p_i}\ket{i}
                    \defeq \ket{\pbold},
            \end{align}
            such that measuring the state $\ket\pbold$ in the computational basis reproduces the effect of sampling from $\pbold$.
            
            \item[(iv)] \textbf{Purified quantum query access}: A unitary $U_\rho$ on $\C^{n}\otimes\C^{n}$ which produces a purification $\ket{\uppsi_\rho}$ of the actual input state $\rho$ in $\C^{n\times n}$
            \begin{align}
            \label{eq:purified-access-quant}
                U_\rho\ket{0}  = \sum_{i=1}^n\sqrt{p_i}\ket{\psi_i}\ket{\phi_i}\defeq\ket{\uppsi_\rho}
            \end{align}
            such that the partial trace over the ancillary register $\Tr[2]{\ketbra{\uppsi_\rho}{\uppsi_\rho}}=\rho$. The states $\{\ket{\psi}\}$ and $\{\ket{\phi}\}$ are sets of orthonormal vectors on the system and ancillary subspaces respectively. Classical probability distributions can be considered as density matrices that are diagonal in the computational basis, so we consider a unitary $U_{\pbold}$ with a simplified action
            \begin{align}
            \label{eq:purified-access-class}
                U_{\pbold}\ket{0}  = \sum_{i=1}^n\sqrt{p_i}\ket{\psi_i}\ket{i}\defeq\ket{\uppsi_{\pbold}}.
            \end{align}
        \end{enumerate}
In all four cases, as is standard in the theory of quantum query complexity, we assume access to both the oracle and its conjugate $U^{\dagger}$.

The frequency vector model can emulate the purified access model for classical distributions, i.e., given a frequency vector oracle, we can construct a purified access oracle with a single query:
\begin{align}
    U\left(\frac{1}{\sqrt{m}}\sum_{j=1}^m\ket{j}\ket{0}\right) &= \frac{1}{\sqrt{m}}\sum_{j=1}^m\ket{j}\ket{v_j}\nonumber\\
        &= \sum_{i=1}^n\left(\frac{\displaystyle\sum_{j:v_j=i}\ket j}{\sqrt{m}}\right)\ket{i}\nonumber\\
        &= \sum_{i=1}^n\sqrt{p_i}\ket{\psi_i}\ket{i},
\end{align}
using the definition of the frequency vector and defining the normalised version of the state in parenthesis on the second line. A similar calculation shows model (ii) can also emulate model (iv). 

The vanilla quantum samples model for access to classical distributions too can emulate the purified query access model: applying a single layer of $\log n$ two-qubit $\CNOT$ gates to $\ket{\pbold}$ suffices to copy the computational basis states $\ket{i}$ into an ancillary register, reproducing the action in \eqref{eq:purified-access-class} for classical distributions. 

It is also worth noting that analogous to model (iii), for the case of mixed quantum states we may have access to multiple independent copies of the unknown state $\rho$, which is the model studied in works including \cite{Acharya2017QuantumAdditive,ODonnell2018Testing}.

\subsection{Algorithmic tools}
The main standard quantum algorithmic techniques that we use are the method of quantum singular value transformations (QSVT), quantum singular value estimation (QSVE), and quantum amplitude estimation (QAE). We give a brief and high level overview of these methods here.

\parahead{QSVT: }The QSVT \cite{Gilyen2018QuantumArithmetics} is a powerful framework for describing and constructing quantum algorithms. At its heart lies the linear algebraic formulation of quantum algorithms, and the theory of quantum walks. Given a unitary $U$ in $a+s$ dimensions that in a certain way encodes a possibly rectangular matrix $P$ with singular value decomposition $\sum_{i}\sigma_i\ketbra{v_i}{w_i}$, the map QSVT(P,f) uses $U$ as a black box and implements a unitary quantum circuit that approximately encodes a matrix $P_f$ with singular values $f(\sigma_i)$ transformed according to a polynomial $f$ defined on the values $\sigma_i$. This framework has been found to be immensely general in scope, in that most known quantum algorithms can be recast in terms of a QSVT. We will use this technique explicitly for the estimating the contribution to the entropy from points with high probability mass, in \cref{sec:heavywt}. We discuss more details regarding the QSVT of relevance to our work in \cref{app:pow-funcs}.

\parahead{QSVE: }The QSVE \cite{Kerenidis2017,Chakraborty2018,Gilyen2018QuantumArithmetics} is a generalisation of the popular and fundamental quantum phase estimation algorithm, generalising it to the task of coherently (i.e., in superposition) estimating the singular values of rectangular matrices. We use QSVE as a subroutine in \cref{sec:lightwt,sec:heavywt} for coherently distinguishing points with high probability mass from those with low probability mass. The subroutine QSVE(P,m) uses $U$ as a black box and maps an input state $\sum_i \alpha_i\ket{v_i}\ket 0$ to $\sum_i \alpha_i\ket{v_i}\ket{\tilde{\sigma}_i}$, where the $\tilde{\sigma}_i$ approximate the singular values $\sigma_i$ of $P$ to $m$ bits of precision. We give more details and a discussion on the relation between QPE and QSVE in \cref{app:phase-est}.

\parahead{QAE: }Being a subroutine that grew out of Grover's search algorithm and the associated amplitude amplification technique, QAE \cite{Brassard2002} estimates the amplitude a quantum state $\ket\psi=U\ket0$ puts on a particular flagged subspace, essentially by running QPE on a Grover iterate (or diffusion operator) constructed from the unitary $U$. The map QAE(flag, $\epsilon$) takes $U$ as input and outputs an $\epsilon$-additive estimate $\tilde{p}$ where $\ket\psi=\sqrt{p}\ket{\text{flag}}+\sqrt{1-p}\ket{\text{junk}}$. We use this technique to estimate various quantities obtained after processing by QSVT and QSVE in both \cref{alg:light,alg:Heavy}. For convenience we recall the standard formal statement of how QAE works in \cref{app:amp-est}.

\section{Estimating entropy to multiplicative precision using purified quantum queries}

In this section, we prove \cref{thm:class,thm:quant} by constructing our quantum algorithms for estimating Shannon and von Neumann entropies, and analysing their correctness and purified-quantum-query complexity. In fact, we prove the following, more general lemma, which can, with simple modifications, handle multiplicative approximation of any entropic functional $f:\mathcal{D}\to\R$ defined on vectors in $\mathcal{D}:=\R^{n}_{\geq 0}$, and may be of more general interest in distributional property testing, as well as in the context of space-bounded computation and streaming input models.

To formally state the lemma, we shall need the following notion of projected unitary encodings (which we will loosely and interchangeably refer to as block encodings), as defined in \cite{Gilyen2018QuantumArithmetics}.

\begin{definition}
\label{defn:block-encoding}
    An $(\alpha, a, \delta)$ projected unitary encoding of an operator $A$ acting on $s$ qubits is a unitary $U$ acting on $a+s$ qubits, such that 
    \begin{equation}
        \norm{A - \alpha\Pi^\dagger U\Pi} \leq \epsilon,
    \end{equation}
    where the first register consists of ancillary qubits,  $\Pi:=\ket{0}^{\otimes a}\otimes\id_s$ is an isometry mapping $(\C^2)^{\otimes s}\mapsto\mathrm{span}_{\C}\{\ket{0}^{\otimes a}\}\otimes (\C^2)^{\otimes s}$, and $\alpha,\epsilon\in(0,\infty)$.
\end{definition}

We will prove the following lemma for projected unitary encodings. Subsequently, using the techniques of \cite{Gilyen2019Distributional} to obtain projected unitary encodings from purified access oracles corresponding to classical distributions and density matrices respectively (see \cref{app:bloc-enc}), we will obtain \cref{thm:class,thm:quant} as immediate corollaries.

\begin{lemma}
    \label{thm:main}
    For any $\gamma>1$ and $0<\epsilon<1$, given an $(\alpha,a,\delta)$ projected unitary encoding $U$ of a matrix $P$ with singular values $\sqrt{p_1},\ldots,\sqrt{p_n}$ where $\pbold=(p_1,\ldots,p_n)$ defines a probability distribution, there is a quantum algorithm which outputs with high probability a $(1+2\epsilon)\gamma$-multiplicative estimate $\tilde{H}$ of the entropy of $H(\pbold)$, for distributions with entropy at least $3\gamma+\nicefrac{1}{2\epsilon}$.
    This algorithm makes $$m=\widetilde{\O}\left(\frac{\alpha n^{\nicefrac{1}{2\gamma^2}}}{\epsilon}\right)$$ uses of $U$ and $U^{\dagger}$, $\O(1)$ uses of controlled-$U$, and needs $\O(ma)$ additional one- and two-qubit gates.
\end{lemma}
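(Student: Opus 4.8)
\parahead{The plan.} I would reduce estimating $H(\pbold)$ to separately estimating the contributions of ``heavy'' and ``light'' outcomes, in the spirit of the classical algorithm of \cite{Batu2002}, but realising both the heavy/light split and the two estimators coherently with quantum subroutines. Fix a threshold $\tau\defeq n^{-1/\gamma^2}$ (up to a polylogarithmic, $\epsilon$-dependent adjustment), let $\mathrm{Hi}\defeq\{i:p_i\ge\tau\}$ and $\mathrm{Lo}\defeq[n]\setminus\mathrm{Hi}$, and write $H(\pbold)=H_{\pbold}(\mathrm{Hi})+H_{\pbold}(\mathrm{Lo})$. Two structural facts drive the argument: (a) $\Abs{\mathrm{Hi}}\le 1/\tau=n^{1/\gamma^2}$, so heavy outcomes are few and have probability bounded below by $\tau$; and (b) for every light $i$ we have $\log(1/p_i)>\log(1/\tau)=(\log n)/\gamma^2$, whence $\ww_{\pbold}(\mathrm{Lo})\log(1/\tau)\le H_{\pbold}(\mathrm{Lo})\le \ww_{\pbold}(\mathrm{Lo})\log n$ up to a lower-order correction from outcomes with $p_i<1/n$ (controlled using the hypothesis that $H(\pbold)$ is bounded away from $0$). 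Thus $H_{\pbold}(\mathrm{Lo})$ is determined to within the multiplicative factor $\sqrt{\log n/\log(1/\tau)}=\gamma$ once the total light mass $\ww_{\pbold}(\mathrm{Lo})$ is known.

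\parahead{The heavy contribution (\cref{sec:heavywt}, \cref{alg:Heavy}).} Since $U$ encodes $P$ with singular values $\sqrt{p_i}$, I would apply $U$ to $\ket 0$ and run QSVE to coherently append an estimate of $\sqrt{p_i}$, using that register to flag the heavy branch; on the heavy branch I would evaluate $\log(1/p_i)=-2\log\sqrt{p_i}$ by QSVT with a polynomial obtained from the power-function approximation of the logarithm of \cite{Zhao07} (see \cref{app:pow-funcs}), which reproduces $\log(1/p_i)$ to relative error $\epsilon$ on singular values $\gtrsim\sqrt\tau$ using degree $\widetilde{\O}(n^{1/(2\gamma^2)})$. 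Running QAE on the prepared state then returns a $(1+\O(\epsilon))$-relative estimate $\widetilde H_{\mathrm{Hi}}$ of $H_{\pbold}(\mathrm{Hi})$. The block-encoding normalisation $\alpha$ and the precision $1/\epsilon$ enter the query count multiplicatively (from running QSVE/QSVT on the $\alpha$-scaled encoding, and from QAE), while the encoding error $\delta$ can be taken inverse-polynomially small at negligible cost.

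\parahead{The light contribution (\cref{sec:lightwt}, \cref{alg:light}).} Here I would again use QSVE to flag the light branch and then QAE to estimate $\ww_{\pbold}(\mathrm{Lo})$ to precision $\epsilon$ (equivalently, estimate $\ww_{\pbold}(\mathrm{Hi})=1-\ww_{\pbold}(\mathrm{Lo})$), producing $\widetilde\ww$, and output $\widetilde H_{\mathrm{Lo}}\defeq\widetilde\ww\cdot\sqrt{\log(1/\tau)\log n}=\widetilde\ww\,(\log n)/\gamma$. By fact (b), $\widetilde H_{\mathrm{Lo}}$ lies essentially in $[(1-\epsilon)H_{\pbold}(\mathrm{Lo})/\gamma,\ (1+\epsilon)\gamma\,H_{\pbold}(\mathrm{Lo})]$. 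This is the step that pins the threshold at $\tau=n^{-1/\gamma^2}$ and hence fixes the query cost: QSVE must resolve singular values near $\sqrt\tau$ to relative precision $\epsilon$, which costs $\widetilde{\O}\!\big(\alpha/(\epsilon\sqrt\tau)\big)=\widetilde{\O}\!\big(\alpha n^{1/(2\gamma^2)}/\epsilon\big)$ uses of $U$ and $U^\dagger$.

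\parahead{Combining, and the main obstacle.} Setting $\widetilde H\defeq\widetilde H_{\mathrm{Hi}}+\widetilde H_{\mathrm{Lo}}$ and using $H_{\pbold}(\mathrm{Hi})+H_{\pbold}(\mathrm{Lo})=H(\pbold)$ yields $\widetilde H\in[(1-\epsilon)H(\pbold)/\gamma,\ (1+\epsilon)\gamma H(\pbold)]\subseteq[H(\pbold)/((1+2\epsilon)\gamma),\ (1+2\epsilon)\gamma H(\pbold)]$, where the extra slack together with the hypothesis $H(\pbold)\ge 3\gamma+1/(2\epsilon)$ absorbs the \emph{additive} errors accrued along the way: the $\O(\epsilon)$ additive error of QAE, the contribution of ``boundary'' outcomes with $p_i$ within the QSVE resolution of $\tau$ (which QSVE may place in the wrong branch), and the $p_i<1/n$ tail. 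The step I expect to be the main obstacle is precisely making the heavy/light split \emph{coherent and robust}: since QSVE only estimates singular values approximately, one must argue that the total probability mass — and hence entropy contribution — of outcomes near the threshold is negligible against $H(\pbold)$, for instance by averaging over a randomised threshold in a small window so that such outcomes may be freely assigned to either branch. A secondary technical point is verifying that the \cite{Zhao07} power-function expansion of $\log$ is implementable by QSVT within the claimed degree while keeping the relative error on heavy outcomes below $\epsilon$. Finally, \cref{thm:class} and \cref{thm:quant} follow by substituting the block encodings of \cref{app:bloc-enc} obtained from purified access oracles, which have $\alpha=\widetilde{\O}(1)$ for classical distributions and $\alpha=\widetilde{\O}(\sqrt n)$ for density matrices, and rescaling $\gamma$ and $\epsilon$ appropriately.
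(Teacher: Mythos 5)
Your plan follows essentially the same route as the paper's proof: the same heavy/light split at threshold $n^{-1/\gamma^2}$, QSVE-based coherent flagging, QAE for the light mass combined with the Batu-style bounds $\ww_{\pbold}(S)\log\nicefrac1\beta\le H_{\pbold}(S)\le \ww_{\pbold}(S)\log n+\nicefrac1\ee$, the Zhao-style power-function surrogate for $\log\nicefrac1x$ implemented by QSVT on the heavy branch, and the same absorption of additive errors via the promise $H(\pbold)\ge 3\gamma+\nicefrac{1}{2\epsilon}$. The only notable deviation is bookkeeping: the paper chooses the exponent $a=\log\gamma/\log\nicefrac1\beta$, so the heavy estimator is a one-sided $\gamma$-multiplicative (not $(1+\epsilon)$-relative) surrogate of $H_{\pbold}(B)$ with $\epsilon$ additive slack --- which is what keeps the amplitude-estimation cost at $\widetilde{\O}\left(\nicefrac{\log n}{\epsilon\log\gamma}\right)$ rather than paying an extra $\nicefrac1\epsilon$ --- and it handles your ``boundary'' concern by rounding $\sqrt\beta$ down to a power of two and tightening $\gamma$ to $\gamma'$ so the QSVE split is clean, rather than randomising the threshold.
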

\parahead{Remark 1.}
The statement of this result is in direct analogy with \cite[Theorem 1]{Batu2002}. In particular, in order for the algorithm to be correct we need the entropy of the input to be bounded away from zero, since no algorithm using any of the input models discussed in \cref{sec:input-models} can output multiplicative estimates of arbitrary distributions, as we will see in \cref{sec:zero-ent-lowerbnd}. If we desire a multiplicative factor of $\gamma$, we can first choose $\gamma'=\frac{\gamma}{(1+2\epsilon)}$. This leads to a small overhead in the complexity scaling as $\O\left(n^{\nicefrac{8\epsilon}{\gamma^2}}\right)$. Given $\eta>0$ we can rephrase this as saying that by choosing $\epsilon<\nicefrac{\eta}{8}$, the algorithm can deal with any distribution with $H(\pbold)=\Omega(\nicefrac{1}{\eta})$ using $\widetilde{\O}\left(n^{\frac{1+\eta}{\gamma^2}}\right)$ queries --- that is, we can weaken the promise on the input and enlarge the class of distributions that the algorithm is correct on by paying a small appropriate price in the complexity.\\
\\
\parahead{Remark 2.}
The $\widetilde{\O}$ in \cref{thm:main} hides factors that scale as (a) $\O(\log^2 n)$; (b) $\log\nicefrac{n}{\epsilon}$; and (c) $\O\left(\nicefrac{\sqrt{\gamma^3}}{\log\gamma}\right)$. It turns out that with a purified quantum query access oracle, we can always create an exact encoding $U$ of the distribution or quantum state with $\delta=0$, and so we will avoid discussing the dependence of the complexity on $\delta$ to avoid clutter.
\\

In particular, \cref{thm:main} captures both the case of Shannon entropy of classical probability distributions and von Neumann entropy density matrices, since the latter is definitionally the Shannon entropy of the eigenvalue spectrum of the density matrix (see \cref{sec:putting-together} for more details). We devote the rest of this section to proving \cref{thm:main}.

The theory behind our estimator is drawn from \cite{Batu2002}. We show how the estimator used therein can be computed more efficiently on a quantum computer with purified query access to the input. Recall that the Shannon entropy is defined by \cref{eq:shannon_entropy}
\[
    H(\pbold) = \sum_i p_i\log \frac{1}{p_i}.
\]
To estimate $H(\pbold)$, we first divide the domain into two sets, of `big' and `small' elements with respect to a choice of threshold $\beta\in(0,1)$:
\begin{equation*}
    B:=B_\beta=\{i\in [n] : p_i\geq \beta\}\;,
\end{equation*}
and 
\begin{equation*}
    S:=S_{\beta}=[n]\setminus B_{\beta}\;.
\end{equation*}
Then $H(\pbold)=H_{\pbold}(B)+H_{\pbold}(S)$, since the Shannon entropy is linear as a function of subsets of its domain. We will aim to make the threshold value $\beta$ as large as possible, and in particular, we would like it to scale inverse sublinearly as a function of $n$.

We start by considering the lightweight elements first.

\subsection{Estimating the entropy of the low weight elements}
\label{sec:lightwt}

The set $S_{\beta}$ of light elements can contribute heavily to the entropy as evidenced for instance by the uniform distribution. However, the low probability mass of these elements can be hard to estimate. If the (unknown) weight of these elements is some $\ww_{\pbold}(S)$, Lemma 4 of \cite{Batu2002} gives us a way to handle $S_{\beta}$. 
\begin{lemma}
\label{lem:batu-lightwt}
    $\ww_{\pbold}(S)\cdot\log\nicefrac{1}{\beta}\leq H_{\pbold}(S) \leq \ww_{\pbold}(S)\cdot\log n + \frac{1}{\ee}$.
\end{lemma}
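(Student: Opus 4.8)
The plan is to prove the two inequalities separately, both by elementary pointwise estimates on the summands of $H_{\pbold}(S)=\sum_{i\in S}p_i\log\frac{1}{p_i}$, after recalling that by definition $i\in S_{\beta}$ means $p_i<\beta$.

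For the lower bound I would simply use this defining property of the small set: every $i\in S_{\beta}$ has $p_i<\beta$, hence $\log\frac{1}{p_i}\geq\log\frac{1}{\beta}$. Summing this estimate weighted by $p_i$ over $i\in S$ gives $H_{\pbold}(S)=\sum_{i\in S}p_i\log\frac{1}{p_i}\geq\big(\sum_{i\in S}p_i\big)\log\frac{1}{\beta}=\ww_{\pbold}(S)\log\frac{1}{\beta}$, which is the claimed lower bound. This step is immediate.

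For the upper bound, the naive estimate $\log\frac1{p_i}\leq\log n$ only holds for $i$ with $p_i\geq\frac1n$, and the very light elements — of which there can be as many as $n$ — need separate treatment; recognising this (rather than trying a single per-element bound, which is too lossy) is really the only point requiring care. The plan is to split $S=S'\sqcup S''$ with $S'=\{i\in S:p_i\geq\frac1n\}$ and $S''=\{i\in S:p_i<\frac1n\}$. On $S'$ one has directly $\sum_{i\in S'}p_i\log\frac1{p_i}\leq\log n\sum_{i\in S'}p_i$. On $S''$ I would write $\log\frac1{p_i}=\log n+\log\frac1{np_i}$, so that $\sum_{i\in S''}p_i\log\frac1{p_i}=\log n\sum_{i\in S''}p_i+\sum_{i\in S''}p_i\log\frac1{np_i}$; adding the two pieces recovers $\ww_{\pbold}(S)\log n$ plus the residual sum $R:=\sum_{i\in S''}p_i\log\frac1{np_i}$.

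It then remains to bound $R$ by a small absolute constant. Here I would use that $x\mapsto -x\log x$ is bounded above on $(0,1]$ by its maximum (attained at $x=\frac{1}{\ee}$): writing $x_i:=np_i\in(0,1)$ for $i\in S''$, each term satisfies $p_i\log\frac1{np_i}=\frac1n\,x_i\log\frac1{x_i}\leq\frac1n\max_{x\in(0,1]}\big({-x\log x}\big)$, and since $|S''|\leq n$ this sums to $R\leq\max_{x\in(0,1]}\big({-x\log x}\big)$, i.e. the additive $\frac{1}{\ee}$ term in the statement (up to the base-conversion factor). Combining with the two $\log n$ contributions gives $H_{\pbold}(S)\leq\ww_{\pbold}(S)\log n+\frac{1}{\ee}$, completing the proof. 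The only mild obstacle, as noted above, is seeing that one must peel off the $\log n$ factor on $S''$ before invoking the boundedness of $-x\log x$, since a direct per-element bound over the up-to-$n$ light elements would be far too weak.
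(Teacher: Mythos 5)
Your proof is correct, and it is rather more explicit than the paper's, which disposes of the lemma in two sentences by pointing at extremal configurations: the lower bound is attributed to a distribution putting mass $\beta$ on $\ww_{\pbold}(S)/\beta$ points (and zero elsewhere), and the upper bound to spreading the weight $\ww_{\pbold}(S)$ uniformly over all $n$ points, implicitly invoking the fact that the constrained entropy is extremised at such configurations (concavity). Your lower bound is instead the direct pointwise estimate $\log\nicefrac{1}{p_i}\geq\log\nicefrac{1}{\beta}$ for $i\in S$, which is cleaner and fully rigorous (it also sidesteps the nicety that the paper's ``extremal'' configuration with masses exactly $\beta$ lies on the boundary of, rather than in, the set $S$). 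For the upper bound, where the paper's uniform-spreading argument amounts to $\ww_{\pbold}(S)\log\frac{n}{\ww_{\pbold}(S)}\leq \ww_{\pbold}(S)\log n+\max_{w\in(0,1]}\bigl(-w\log w\bigr)$ (equivalently, Jensen applied to $\sum_{i\in S}\frac{p_i}{\ww_{\pbold}(S)}\log\frac{1}{p_i}$), you split $S$ at the threshold $\nicefrac1n$ and bound the residual $\sum_{i\in S''}p_i\log\frac{1}{np_i}$ term by term via $\max_{x\in(0,1]}(-x\log x)$ together with $|S''|\leq n$; both routes yield exactly the same constant, so the difference is purely one of presentation. The one small point, which you already flag yourself, is that under the paper's base-2 convention the maximum of $-x\log x$ is $\nicefrac{\log\ee}{\ee}\approx 0.53$ rather than $\nicefrac{1}{\ee}$, so the additive constant is strictly $\log(\ee)/\ee$; the paper's own argument has the same slippage, and the constant is immaterial to how the lemma is used downstream.
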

\begin{proof}
    The lower bound is attained by a distribution that has as many points as possible with extremal weight (equal to $\beta$ or $0$), e.g., by a distribution with $\frac{1}{\beta}\cdot \ww_{\pbold}(S)$ elements having probability mass $\beta$ and the rest being zero.
    
    The upper bound is given by the distribution that puts the weight $\ww_{\pbold}(S)$ uniformly on all of its $n$ points.
\end{proof}
This enables us to simply estimate the total weight of the light elements, and use this to improve our approximation to $H(\pbold)$, as we shall see below in \cref{sec:combine-B-S}. For the moment, we note that for the choice of threshold $\beta=n^{-\nicefrac{1}{\gamma^2}}$, \cref{lem:batu-lightwt} tightly bounds the entropy of the lightweight elements as lying between their net weight times $\frac{\log n}{\gamma^2}$ and the weight times $\log n$ (up to a small constant). 

Furthermore, we can also see from the upper bound in \cref{lem:batu-lightwt} that unless the net weight of the light elements scales at least inverse logarithmically in $n$, the contribution of the light set $S_\beta$ to the entropy is bounded by a constant. In particular, 
\begin{equation}
    \label{eq:eps_1}
    \ww_{\pbold}(S)=o\left(\frac{1}{\log n}\right) \implies H_{\pbold}(S) = \frac{1}{e} + o(1).
\end{equation}

Thus we may choose to estimate $\ww_{\pbold}(S)$ to additive precision $\epsilon_1=\O\left(\nicefrac{1}{\log^2 n}\right)$. Given an $(\alpha, a, \delta)$-projected unitary encoding of $P$, intuition suggests that we can perform Quantum Phase Estimation (QPE) with this encoding to single out the lightweight elements and then estimate their net amplitude by performing Quantum Amplitude Estimation (QAE, \cref{app:amp-est}). This indeed turns out to work. \\

\begin{algorithm}[ht]
\setstretch{2.2}
\caption{\textsc{LightWeight}$(\beta)$ -- Estimate $\ww_{\pbold}(S)$ to additive precision $\epsilon_1$.}
\label{alg:light}
\begin{algorithmic}[1]
    \State $m=\log\frac{1}{\sqrt{\beta}}$ \Comment{Number of bits of precision for QSVE}
    \State input $\gets U_{\pbold}\ket{0}\ket{0^{m}} = \displaystyle \sum_{i\in[n]}\sqrt{p_i}\ket{\phi_i}\ket{i}\ket{0^{m}}$  \Comment{Input state for QSVE}
        \State $\ldots \xrightarrow{~~\mathrm{QSVE}\left(P,~ m\right)~~} \displaystyle\sum_{i\in[n]}\sqrt{p_i}\ket{\phi_i}\ket{i}\ket{q_i}$ \Comment{$q_i=0 \iff p_i<\beta$}
        \Statex $\hfill= \displaystyle\sum_{i\in S}\sqrt{p_i}\ket{\phi_i}\ket{i}\ket{0^m}_{\text{flag}} + \displaystyle\sum_{i\in B}\sqrt{p_i}\ket{\phi_i}\ket{i}\ket{\perp}_{\text{flag}}$ \Comment{$\braket{\perp|0^m}=0$}
        \State Pick $\epsilon_1 = \nicefrac{\epsilon}{\log^2 n}$ and let \Comment{$\implies\mathrm{QAE}$ cost $=\O\left(\log^2 n\right)$}
        \Statex $\tilde{w}_{\pbold}(S)=\mathrm{QAE}(\text{flag}=\ket{0^m},\epsilon_1)$ \Comment{$\Abs{\tilde{w}_{\pbold}(S) - w_{\pbold}(S)}\leq\epsilon_1$}
         \\
    \Return $\tilde{w}_{\pbold}(S)$
\end{algorithmic}
\end{algorithm}

\parahead{Using quantum singular value estimation to flag the light subspace: }
We would like to use QPE as a subroutine to separately flag the subspaces of heavy and light elements. In essence, we want to perform the map
\begin{equation}
    \label{eq:qpe-qsve}
    \sum_{i\in[n]} \sqrt{p_i}\ket{\phi_i}\ket{i}\otimes \ket{0^{m}} \mapsto \sum_{i\in[n]} \sqrt{p_i}\ket{\phi_i}\ket{i} \ket{q_i},
\end{equation}
where $\Abs{\sqrt{p_i}-q_i}\leq 2^{-(m+1)} =: \epsilon$, and $m$ is the number of bits of precision. When given a unitary block encoding for the matrix $P$, this problem is termed Quantum Singular Value Estimation (QSVE) and is solved in \cite{Kerenidis2017,Chakraborty2018,Gilyen2018QuantumArithmetics}; the complexity of their algorithm is essentially $\widetilde{O}\left(\nicefrac{1}{\epsilon}\right)$ where $\epsilon$ is the precision to which we would like to estimate the singular values. We defer the full discussion of the intuition and details behind doing this to \cref{app:phase-est}. 

The accuracy of the QPE subroutine is normally defined in terms of the number of bits of precision. Thus, to obtain a clean split between heavy and light elements according to the chosen threshold 
\[
    \beta=n^{-\frac{1}{\gamma^2}} = 2^{-\frac{\log n}{\gamma^2}},
\]
recalling that $P$ has singular values $\sqrt{p_i}$, it will be convenient for us to first round $\sqrt{\beta}$ down to the nearest power of $2$, and then scale down the approximation factor $\gamma$ appropriately. To this end, we set
\[
    \sqrt{\beta'} = 2^{-\left\lceil\frac{\log n}{2\gamma^2}\right\rceil} =: n^{-\frac{1}{2\gamma'^2}},
\]
which suggests that we should choose a tighter approximation factor, given by
\[
    \gamma' = \gamma\cdot\sqrt{\frac{\nicefrac{\log n}{2\gamma^2}}{\left\lceil\nicefrac{\log n}{2\gamma^2}\right\rceil}}.
\]
Note that $\gamma'\leq \gamma$, and so any good $\gamma'$-approximation is also a good $\gamma$-approximation. With this choice, we can run QSVE with the block encoding of $P$ to $m=\log\frac{1}{\sqrt{\beta'}}$ bits of precision, implying an additive precision of $\epsilon=2^{-(m+1)}$, incurring a complexity of
\begin{align}
    \O\left(\frac{1}{\epsilon}\right) &= \O\left(\frac{1}{\sqrt{\beta'}}\right) \nonumber\\
        &=\O\left(n^{\frac{1}{2\gamma'^2}}\right)
        =\O\left(n^{\frac{1}{2\gamma^2}}\right),
\end{align}
where we have used that
\begin{align*}
    \frac{1}{\gamma^2}\cdot\frac{\left\lceil\nicefrac{\log n}{2\gamma^2}\right\rceil}{\frac{\log n}{2\gamma^2}} &\leq \frac{1}{\gamma^2}\cdot\frac{\frac{\log n}{2\gamma^2}+1}{\frac{\log n}{2\gamma^2}}\\
    & \leq \frac{1}{\gamma^2}\left(1+\frac{2\gamma^2}{\log n}\right),
\end{align*}
and 
$
    n^{\nicefrac{1}{\log n}} = \O(1).
$

\parahead{Correctness:} With these steps in hand, one can readily see that \cref{alg:light} outputs an additive estimate of the net weight of all the elements whose probability mass lies below the threshold $\beta$, i.e.,
\begin{equation}
    \label{eq:lightwt-est}
    \Abs{\tilde{\ww}_{\pbold}(S) - \ww_{\pbold}(S)} \leq \epsilon_1.
\end{equation}\\
\parahead{Complexity: }With the reasoning of \cref{eq:eps_1} and our choice of $\epsilon_1=\nicefrac{\epsilon}{\log^2 n}$, the query complexity of \cref{alg:light} in terms of queries to the block encoding of $P$ is given by
\begin{equation}
    \O\left(\frac{\log^2 n}{\epsilon}\right)\cdot\O\left(\frac{1}{\sqrt{\beta}}\right),
\end{equation}
where the first term is the complexity of the amplitude estimation step (\cref{app:amp-est}). For the choice $\beta = n^{-\nicefrac{1}{\gamma^2}}$, this becomes
\begin{equation}
    \label{eq:light-compl}
    \O\left(\frac{n^{\nicefrac{1}{2\gamma^2}}\log^2 n}{\epsilon}\right).
\end{equation}
Intuitively, as previously noted in \cite{Gilyen2019Distributional}, the projected unitary encoding of the input gives us \textit{operational access} to the square roots of the point probabilities. Hence, since $p_i<\sqrt{p_i}$, quantum algorithms can in a sense `see' lightweight elements more easily than classical algorithms.

\parahead{The normalisation $\alpha$ of the block encoding: } So far we have not discussed how the normalisation factor $\alpha\geq1$ and precision $\delta\in(0,1)$ of the $(\alpha, a, \delta)$ projected unitary encoding $U$ of $P$ affects the algorithm and its complexity. This is, fortunately, easy to do. First we note that the states the purified access oracles generate, both in \cref{eq:purified-access-class,eq:purified-access-quant}, actually encodes \textit{the exact} values of $\sqrt{p_i}$ and require no additional normalisation factor. Thus the only step where $\alpha$ and $\delta$ enter the picture are in the QSVE step. Next, as we remarked previously, the constructions that we use will have $\delta=0$, so we ignore this precision, which even otherwise would only contribute logarithmic factors to the complexity. Finally, since the singular values of $P$ are $\sqrt{p_i}$ and $U$ encodes $P/\alpha$, the normalisation factor effectively means that in the QSVE step we must work harder and estimate them to precision $\sqrt{\beta}/\alpha$ --- this directly contributes exactly a factor of $\alpha$ to the overall complexity in \cref{eq:light-compl} of \cref{alg:light}. 

This can also be seen directly from the complexity of QSVE in \cite[Theorem 27]{Chakraborty2018}, which scales as $\O(\nicefrac{\alpha}{\epsilon})$ for performing QSVE to precision $\epsilon$ using an $(\alpha,a,\delta)$ block encoding.

\subsection{Estimating the entropy $H_{\pbold}(B)$ of the heavy elements}
\label{sec:heavywt}
So far we have looked at the set $S_{\beta}$ of lightweight elements. We now turn to estimating the contribution to the entropy from the heavy elements, i.e., those with probability mass greater than the threshold value of $\beta$. 

The first ingredient we use for this is a result from classical approximation theory that provides a multiplicative approximation of the logarithm by a combination of functions of the form $x^{\pm a}$ for small values of $a>0$ \cite{Zhao07}.

\subsubsection{Multiplicative approximation of entropy using power functions}
For any $a\in(0,1)$, consider the following functions 
\begin{equation}
    f_{\pm}(x) = x^{\pm a},~~~~~~
    f(x) = -\frac{f_{+}(x)-f_{-}(x)}{2a}. 
\end{equation}
On a domain $(\beta, 1]$ for some $\beta>0$, we have the following series expansions for $f_{\pm}(x)$ 
\begin{align}
    x^{\pm a} &= e^{\pm a\log x} \nonumber\\
            &= 1 \pm a\log x + \frac{(a\log x)^2}{2!} \pm\frac{(a\log x)^3}{3!} + \ldots,    
\end{align}
from which we deduce that
\begin{equation}
    f(x) = -\log x\cdot\left(1 + \frac{(a\log x)^2}{3!} +\frac{(a\log x)^4}{5!} + \ldots\right).
\end{equation}
Since $\log x<0$ on our interval of interest $x\in(\beta,1]$, it will be convenient to write $-\log x = \log\nicefrac{1}{x}>0$. We see that $f(x)$ is a one sided approximation for $\log x$, in the sense that it is always at least $\log \nicefrac{1}{x}$, and at most as large as $\log \nicefrac{1}{x}\cdot g(a,x)$, where the multiplicative factor is
\begin{align}
    g(a,x)&=1 + \frac{(a\log x)^2}{3!} +\frac{(a\log x)^4}{5!} + \ldots\nonumber\\ 
        &\leq 1 + |a\log x| + \frac{(a\log x)^2}{2!} + \frac{\Abs{a\log x}^3}{3!} +\frac{(a\log x)^4}{4!} + \ldots\nonumber\\
        &=\ee^{\Abs{a\log x}}.
\end{align}
Note that $\forall x\in (\beta, 1]$, $g(a,x) \leq g(a,\beta)$. Thus if we would like $f(x)$ to approximate $\log x$ to within a multiplicative factor $\gamma>1$ in the sense of \cref{eq:multi_estimate}, it suffices to choose $a \in(0,1)$ such that 
\begin{align*}
     g(a,\beta) \leq \ee^{a\log\nicefrac{1}{\beta}} &\leq \gamma\;,
\end{align*} 
which translates to requiring that
\begin{align}
    a \leq \frac{\log \gamma}{\log\nicefrac{1}{\beta}}.
\end{align}
Choosing such an $a$, we have a $\gamma$-multiplicative approximation of the logarithm for $x\in(\beta,1]$, i.e.,
\begin{equation}
    \label{eq:mult-approx-log}
    \frac{1}{\gamma}\cdot\log \nicefrac{1}{x} \le \log \nicefrac{1}{x}\leq f(x) \leq \gamma\cdot\log \nicefrac{1}{x}.
\end{equation}
It is worth remarking that this is actually \textit{stronger} than a $\gamma$-multiplicative approximation, by virtue of being one-sided; the lower bound on the approximation is actually as strong as $\log \nicefrac{1}{x}$, without the scaling by $\nicefrac{1}{\gamma}$. We will use this stronger inequality in our calculations below and freely replace it with the weaker version where required. 

For $i\in B_\beta$, if we replace the logarithm of $p_i$ with the function $f$, we get a multiplicative approximation of $H_{\pbold}(B)$. Indeed by \cref{eq:mult-approx-log}, $\forall p_i\in(\beta, 1]$
\begin{equation}
    \log \frac{1}{p_i} \leq f(p_i) \leq \gamma\cdot\log \frac{1}{p_i}.
\end{equation}
Multiplying by $p_i$ and summing over $i\in B_\beta$, we have
\begin{equation}
    \sum_{i\in B} p_i \log \frac{1}{p_i} \leq \sum_{i\in B} p_if(p_i) \leq \gamma\cdot\sum_{i\in B} p_i \log \frac{1}{p_i},
\end{equation}
and so we have 
\begin{equation}
    \label{eq:HB-mult-approx}
    H_{\pbold}(B) \leq \sum_{i\in B} p_i f(p_i) \leq \gamma\cdot H_{\pbold}(B)\;.
\end{equation}
We hence see that in fact for any $B\subseteq[n]$, $\displaystyle\sum_{i\in B} p_i f(p_i)$ is a good $\gamma$-multiplicative approximation to $H_{\pbold}(B)$. Next, we look at how to estimate the sum over $p_i f(p_i)$ by using the quantum singular value transformations (QSVT) technique \cite{Gilyen2018QuantumArithmetics} of implementing functions of block encoded matrices on quantum computers (see also \cite{Chakraborty2018,Subramanian_2019}).

\subsubsection{Using QSVT to estimate power sums}
Defining the following power sums with exponent $a$, 
\begin{align}
    F_{\pm} = \sum_{i\in B} p_i^{1\pm a} = \sum_{i\in B} p_if_{\pm}(p_i),
\end{align}
we see that our estimator for $H_{\pbold}(B)$ in \cref{eq:HB-mult-approx} takes the form
\begin{equation}
    F = -\frac{F_{+}-F_{-}}{2a}.
\end{equation}
This suggests a strategy of estimating $F_{\pm}$ separately and combining them to obtain $F$.  Ideally, we would have liked to prepare two states, which have squared amplitude equal to $F_{\pm}$ on a subspace flagged by $\ket{0}$ in the ancilla. One possible form such states might take is
\[
    \sum_i \sqrt{p_i} f_{\pm}(\sqrt{p_i}) \ket{\psi_i}\ket{i}\ket{0} + \ket{\text{junk}}\ket{1}.
\]
Since we cannot directly implement arbitrary power functions $x^{\pm a}$, we use the standard technique of implementing the quantum singular value transformation corresponding to polynomial approximations $\tilde{f}_{\pm}(x)$ of $f_{\pm}(x)$ over the domain specified by $B$; recalling that we have a block encoding of $P$ whose singular value spectrum encodes $\sqrt{p_i}$, we only need to work with the domain $x\in[\sqrt{\beta},1]$. 

We discuss the details behind constructing and implementing the polynomials $\tilde{f}_{\pm}(x)$ in \cref{app:pow-funcs}. Intuitively, polynomial approximations using Taylor series give an exponential convergence in the degree of the approximating polynomial for smooth functions, i.e., the degree of the approximating polynomial only needs to grow as $\log\nicefrac{1}{\epsilon}$. Since the query complexity of QSVT depends on the degree of the polynomial being implemented, this in conjunction with what we noted above about the domain of approximation being $[\sqrt{\beta},1]$ leads us to expect the net query complexity of estimating $F_{\pm}$ to grow as $\O\left(\nicefrac{1}{\sqrt{\beta}}\right)$. We will show below that this is indeed the case.

\begin{algorithm}[ht]
\setstretch{2.2}
\caption{\textsc{HeavyEntropy}$(\beta)$ -- Estimate $H_{\pbold}(B)$ to multiplicative factor $\gamma>1$.}
\label{alg:Heavy}
\begin{algorithmic}[1]
    \State input $\gets U_{\pbold}\ket{0}\ket{0} = \displaystyle \sum_{i\in[n]}\sqrt{p_i}\ket{\phi_i}\ket{i}\ket{0}$  \Comment{Input state for QSVT}
        \State $\ldots \xrightarrow{~~\mathrm{QSVT}\left(P,~\tilde{f}_{\pm}\right)~~} \displaystyle\sum_{i\in [n]}\sqrt{p_i}\tilde{f}_{\pm}(\sqrt{p_i})\ket{\phi_i}\ket{i}\ket{0} + \ket{\text{junk}}\ket{1}$ \Comment{QSVT - \cref{app:pow-funcs}}
        \State $\ldots \xrightarrow{~~\mathrm{QSVE}\left(P,~ m\right)~~} \displaystyle\sum_{i\in B}\sqrt{p_i}\tilde{f}_{\pm}(\sqrt{p_i})\ket{\phi_i}\ket{i}\ket{0}\ket{0}_{\text{flag}} + \ket{\text{junk}}\ket{1}_{\text{flag}}$ 
        \Statex  \Comment{$\text{flag}=0 \iff p_i\geq\beta$}
        \State Pick $\epsilon_3$ as in \cref{eq:heavy-eps3} and let 
        \Statex $\tilde{F}_{\pm}=\mathrm{QAE}(\text{flag}=\ket{0},\epsilon_3)$ \Comment{$\implies\mathrm{QAE}$ cost $=\O\left(\frac{1}{\epsilon_3}\right)$} \\
    \Return $\tilde{F} = -\frac{2\tilde{F}_{+}-2\sqrt{\gamma}\tilde{F}_{-}}{2a}$
\end{algorithmic}
\end{algorithm}

\parahead{Some difficulties:} In constructing polynomial approximations on our subdomain $[\sqrt{\beta},1]$ of interest, we need to admit a small interval $[\nicefrac{\sqrt{\beta}}{2},\sqrt{\beta}]$ where we allow the polynomial to vary before falling to low values on the rest of the domain $[-1,1]$. It is in accordance with this intuition that we have the guarantees $\Abs{f_{\pm}(x)}\leq 1$ on $[0,1]$, and $\Abs{f(x)}\leq \epsilon$ on $[0,\nicefrac{\sqrt{\beta}}{2})$ in \cref{app:pow-funcs}. If we na\"{i}vely try to use the simple state produced by applying the QSVT for $\tilde{f}_{\pm}$ as in step 2 of \cref{alg:Heavy} to estimate $F_{\pm}$, we see that the amplitude of the part of the state flagged by zero is actually given by
\begin{align}
    \sum_{i=1}^n p_i \tilde{f}^2_{\pm}(\sqrt{p_i}) &= \sum_{i\in B_\beta} p_i \tilde{f}^2_{\pm}(\sqrt{p_i}) + \sum_{i\in [\nicefrac{\beta}{2},\beta]} p_i  \tilde{f}^2_{\pm}(\sqrt{p_i}) \nonumber\\
        &\quad+\sum_{i\in [0,\nicefrac{\beta}{2}]} p_i  \tilde{f}^2_{\pm}(\sqrt{p_i}).
\end{align}
In particular, the second term on the rhs above is undesirable and in addition not easy to control. 
While we can upper bound the error caused by this term, we cannot manage it well without increasing the degree of the polynomial and hence incurring overheads in the query complexity. 

\parahead{Our approach:} It may in principle be possible to modify the approximating polynomial sufficiently with only polylogarithmic overheads in the degree, but we do not explore this route; instead we once again invoke QSVE as a tool to split the heavy and light weight subspaces. We may repeat the arguments of the previous section, with the slight modification of flagging the heavy elements with $\ket{0}$ in a single qubit ancilla, say by applying a unitary comparator circuit to check if the estimated singular value $q_i>\sqrt{\beta}$. 

\parahead{Correctness: }Thus, generating states of the form given in step 3 of \cref{alg:Heavy}, we see that their amplitudes in the flagged subspace now exactly square to
\begin{align}
    \tilde{F}_{\pm} = \sum_{i\in B} p_i \tilde{f}^2_{\pm}(\sqrt{p_i}),
\end{align}
and taking into account the normalisation factors corresponding to $\tilde{f}_{\pm}$, we define
\begin{equation}
    \tilde{F} = -\frac{2\tilde{F}_{+}-2\sqrt{\gamma}\tilde{F}_{-}}{2a} ,   
\end{equation}
where in particular, by the guarantees on the polynomial approximations \cref{app:pow-funcs} we know that $\forall x\in[\sqrt{\beta},1]$
\begin{align}
    \Abs{\tilde{f}_{-}(x) - \frac{\sqrt{\beta}^a}{2}x^{-a}} &\leq \epsilon_2\nonumber\\
    \Abs{\tilde{f}_{+}(x) - \frac{x^a}{2}} &\leq \epsilon_2.
\end{align}
Recalling that 
\[
    a = \frac{\log \gamma}{\log\nicefrac{1}{\beta}},
\]
we see that the normalisation factor for $f_{-}$ is given by
\[
    \beta^{\nicefrac{a}{2}} = \ee^{\log\beta\cdot\frac{\log\gamma}{-2\log\beta}} = \frac{1}{\sqrt\gamma}.
\]
From these approximation guarantees we immediately see that
\begin{align}
    \Abs{F_{\pm}-\tilde{F}_{\pm}} &\leq n\epsilon_2\nonumber\\
    \Abs{F-\tilde{F}} &\leq \frac{\sqrt{\gamma}n\epsilon_2}{a} 
\end{align}
Furthermore, the QAE steps for $\tilde{F}_{\pm}$ are performed to some precision $\epsilon_3$, yielding 
$\hat{F}_{\pm}$ which satisfy
\begin{equation}
    \Abs{\hat{F}_{\pm} - \tilde{F}_{\pm}} \leq \epsilon_3,
\end{equation}
and so we also have the analogous quantity $\hat{F}$ such that
\begin{equation} 
    \Abs{\hat{F} - \tilde{F}} \leq \frac{2\sqrt{\gamma}\epsilon_3}{a} 
\end{equation}
As indicated by the above expressions for the error, let us then choose the precision of the polynomial approximation and QAE steps such that the errors above are of constant order, i.e.,
\begin{align}
\label{eq:heavy-eps3}
    \epsilon_2 = \frac{a}{2n\sqrt\gamma}\cdot \epsilon &= \frac{\log \gamma}{2\sqrt\gamma n\log\nicefrac{1}{\beta}}\cdot \epsilon\nonumber\\
    \epsilon_3 = \frac{a}{4\sqrt\gamma}\cdot \epsilon &= \frac{\log \gamma}{4\sqrt\gamma \log\nicefrac{1}{\beta}}\cdot \epsilon.
\end{align}
This ensures that $\hat{F}$ is a good $\epsilon$-additive approximation to $F$, since
\begin{align}
    \Abs{F - \hat{F}} \leq \Abs{F - \tilde{F}} + \Abs{\tilde{F} - \hat{F}} \leq \epsilon.
\end{align}
Since this means $F - \epsilon \leq \hat{F} \leq F + \epsilon$, recalling that $F$ is a good $\gamma$-multiplicative approximation of $H_{\pbold}(B)$ as in \cref{eq:HB-mult-approx}, we arrive at the
following multiplicative guarantee with some additive slack on our estimate $\hat{F}\equiv\tilde{H}_{\pbold}(B)$
\begin{equation}
    \label{eq:HB-est}
    \frac{H_{\pbold}(B)}{\gamma} - \epsilon \leq \hat{F} \leq \gamma H_{\pbold}(B)  + \epsilon.
\end{equation}

\parahead{Complexity: }The total query complexity of \cref{alg:Heavy} is the sum of the complexity of QSVT and QSVE corresponding to the polynomial approximations of $f(x)=x^{\pm a}$, multiplied by the complexity of QAE. The former two are respectively given by the degrees of $\tilde{f}_{\pm}$ from \cref{app:pow-funcs} and $\O(\nicefrac{1}{\sqrt{\beta}})$, both of which scale as $\widetilde{\O}(\nicefrac{1}{\beta})$, while the latter is the inverse of $\epsilon_3$. Thus, the net complexity is bounded by
\begin{align}
    \label{eq:heavy-compl}
    \left[\O\left(n^{\nicefrac{1}{2\gamma^2}}\right) + \O\left(n^{\nicefrac{1}{2\gamma^2}}\log\frac{n\log n}{\epsilon\log\gamma}\right)\right]&\cdot\O\left(\frac{\log n}{\epsilon\log\gamma}\right)\nonumber\\
        &\qquad\qquad=\widetilde{\O}\left(\frac{n^{\nicefrac{1}{2\gamma^2}}\log^2 n}{\epsilon\log\gamma}\right).
\end{align}

\parahead{The normalisation $\alpha$ of the block encoding: }First, we essentially repeat the discussion at the end of \cref{sec:lightwt}, giving a factor of $\alpha$ in the complexity of the QSVE step in \cref{alg:Heavy}.

In addition, we have to also account for the fact that the power functions are now implemented on the spectrum of $P/\alpha$ (for more details, see \cite[Theorem 56]{Gilyen2018QuantumArithmetics}). This has two effects: firstly, the domain over which we construct the polynomial approximations $\tilde{f}_{\pm}$ should now be $[\nicefrac{\sqrt{\beta}}{\alpha},1]$, which increases their degree by a factor of $\alpha$. Secondly, we obtain the power sums $F_+$ and $F_-$ are divided and multiplied respectively by an extra $\alpha^a$ factor, which means we need to choose $\epsilon_2$ and $\epsilon_3$ to be smaller by this factor. 

Since $\alpha=\frac{\log \gamma}{\log\nicefrac{1}{\beta}}$, for polynomially scaling normalisations $\alpha=n^{c}$ and $\beta=n^{-\nicefrac{1}{\gamma^2}}$ we see that $\alpha^a$ scales as $\gamma^{c\gamma^2}=\O(1)$ for our purposes. 

Hence since the complexities of the QSVT and QSVE steps add, we see once again that the net overhead in complexity is a factor of $\alpha$. 
\subsection{Combining the heavy and light estimates to form $\tilde{H}$}
\label{sec:combine-B-S}
Finally, we analyse the errors in and combine the estimates of $H_{\pbold}(B)$ and $H_{\pbold}(S)$ from \cref{alg:light,alg:Heavy} to get an estimate of $H(\pbold)$.
\begin{algorithm}[ht]
\setstretch{2}
\caption{Approximating $H({\pbold})$ to multiplicative precision $\gamma>1$}
\label{alg:main}
\begin{algorithmic}[1]
    \State $\gamma'\gets \gamma\cdot\sqrt{\frac{\nicefrac{\log n}{\gamma^2}}{\lceil\nicefrac{\log n}{\gamma^2}\rceil}}$ \Comment{Scale approximation factor down}
    \State $\beta \gets n^{-\nicefrac{1}{\gamma'^2}}$
    \Comment{Choose threshold that splits heavy \& light elements}
        \State $\tilde{H}_{\pbold}(B) \gets \Call{HeavyEntropy}{\beta}$ \Comment{\small $\frac{H_{\pbold}(B)}{\gamma} - \epsilon \leq \tilde{H}_{\pbold}(B) \leq \gamma H_{\pbold}(B)  + \epsilon$}
        \State $\tilde{w}_{\pbold}(S) \gets \Call{LightWeight}{\beta}$ \Comment{\small $\Abs{\tilde{\ww}_{\pbold}(S) - \ww_{\pbold}(S)} \leq \epsilon_1$}\\
    \Return $\tilde{H}_{\pbold}(B) + \frac{\tilde{w}_{\pbold}(S)\log n}{\gamma'}$
\end{algorithmic}
\end{algorithm}
Thus far, we have estimated the entropy of the heavy elements to multiplicative precision, viz \cref{eq:HB-est},
and we have estimated the total weight of the lightweight elements to additive precision \cref{eq:lightwt-est}.
The estimate that our algorithm outputs for $H(\pbold):= H_{\pbold}([n])$ is the quantity $\tilde{H}$ defined by
\begin{equation}
    \tilde{H} = \tilde{H}_{\pbold}(B) + \frac{\tilde{\ww}_{\pbold}(S)\log n}{\gamma}.
\end{equation}
\parahead{Upper bounding $\tilde{H}$:} Using the upper bounds on $\hat{F}$ and $\tilde{\ww}_{\pbold}(S)$, we see that
\begin{align}
    \tilde{H} \leq \gamma H_{\pbold}(B)  + \epsilon + \frac{\ww_{\pbold}(S) + \epsilon_1}{\gamma}\cdot \log n. 
\end{align}
The lower bound on $H_{\pbold}(S)$ in \cref{lem:batu-lightwt} along with the choice $\beta=n^{-\nicefrac{1}{\gamma^2}}$ implies that
\begin{equation}
    \frac{\ww_{\pbold}(S)\log n}{\gamma} \leq \gamma H_{\pbold}(S).
\end{equation}
Recalling that $\epsilon_1=\nicefrac{\epsilon}{\log^2 n}$, we thus have
\begin{align}
    \tilde{H} &\leq \gamma H_{\pbold}(B) + \gamma H_{\pbold}(S) + 2\epsilon \nonumber\\
        &\leq \gamma H(\pbold) + 2\epsilon\nonumber\\
        &\leq \left(1+2\epsilon\right)\gamma H(\pbold),
\end{align}
where on the last line we assume that $H(\pbold)\geq \frac{1}{\gamma}$, i.e., that the entropy of the input distribution is bounded away from zero, with a small increase in the approximation factor. This is a reasonable assumption to make, since no algorithm can output a good $\gamma$-multiplicative approximation of \textit{all} distributions (see \cref{sec:zero-ent-lowerbnd} for more details). 

\parahead{Lower bounding $\tilde{H}$:} This time using the lower bounds on $\hat{F}$ and $\tilde{\ww}_{\pbold}(S)$, we see that
\begin{align}
    \tilde{H} \geq H_{\pbold}(B)  - \epsilon + \frac{\ww_{\pbold}(S) - \epsilon_1}{\gamma}\cdot \log n. 
\end{align}
The upper bound on $H_{\pbold}(S)$ in \cref{lem:batu-lightwt} implies that
\begin{equation}
    \frac{\ww_{\pbold}(S)\log n}{\gamma} \geq \frac{H_{\pbold}(S)-\frac{1}{e}}{\gamma}.
\end{equation}
Thus we have that
\begin{align}
    \tilde{H} &\geq H_{\pbold}(B) + \frac{H_{\pbold}(S)}{\gamma} - 2\epsilon -\frac{1}{e\gamma} \nonumber\\
        &\geq \frac{H(\pbold)}{\gamma} - 2\epsilon-\frac{1}{\gamma}\nonumber\\
        &\geq \frac{H(\pbold)}{\left(1+2\epsilon\right)\gamma},
\end{align}
where to go from the second to the third line we assume that $H(\pbold)\geq 3\gamma + \nicefrac{1}{2\epsilon} \geq \frac{1}{\gamma}$, i.e., as before, that the entropy of the input distribution is bounded away from zero, with a small increase in the approximation factor. 

This shows that \cref{alg:main} outputs a $(1+2\epsilon)\gamma$-multiplicative approximation of $H(\pbold)$ and thus establishes its correctness. 

\parahead{Complexity: }Finally, the query complexity of \cref{alg:main} is the sum of the complexities of steps (3) and (4). From the complexities \cref{eq:light-compl,eq:heavy-compl} of \cref{alg:light,alg:Heavy} respectively, we have a net query complexity that scales as
\begin{equation}
    \label{eq:final-compl}
    \O\left(\frac{\alpha n^{\nicefrac{1}{2\gamma^2}}\log^2 n}{\epsilon}\right) + \widetilde{\O}\left(\frac{\alpha n^{\nicefrac{1}{2\gamma^2}}\log^2 n}{\epsilon\log\gamma}\right) = \widetilde{\O}\left(\frac{\alpha n^{\nicefrac{1}{2\gamma^2}}\log^2 n}{\epsilon\log\gamma}\right),
\end{equation}
which completes our proof of \cref{thm:main}.

\subsection{Proving \cref{thm:class,thm:quant}}
\label{sec:putting-together}
For classical probability distributions accessed via a purified quantum query oracle $U_{\pbold}$ as in \cref{eq:purified-access-class}, we can construct projected unitary encodings of the kind required in \cref{thm:main} with $\alpha=1$ (see \cref{app:bloc-enc}). This immediately furnishes a proof of \cref{thm:class}. 

Similarly, for an arbitrary $n$-dimensional quantum density matrix accessed via a purified quantum query oracle $U_{\rho}$ as in \cref{eq:purified-access-quant}, we can construct a projected unitary encoding with $\alpha=\sqrt{n}$ (see \cref{app:bloc-enc}). Since $S(\rho)$ is equal to the Shannon entropy of the spectrum of $\rho$, we can plug this encoding into \cref{thm:main} to obtain a proof of \cref{thm:quant}. 

The key difference between the case of classical distributions and quantum mixed states is that for the former, we know that the purified access oracle produces a superposition over computational basis states in the second register, and we can use this knowledge to our advantage. On the other hand, for the latter case we do not \textit{a priori} know the basis in which the quantum state is diagonal, which reflects in the fact that we do not know the states $\ket{\psi_i}$ and $\ket{\phi_i}$ appearing in \cref{eq:purified-access-quant} beforehand.


\section{Lower bounds}
Batu et al.\ \cite{Batu2002} proved that even if we restrict to distributions with entropy $H(\pbold)\geq \nicefrac{\log n}{\gamma^2}$, any algorithm that estimates $H(\pbold)$ within a multiplicative $\gamma>1$ requires $\Omega(n^{\nicefrac{1}{2\gamma^2}})=\O(\sqrt{n})$ samples.  By arguing about the fingerprints of samples drawn from the unknown distribution, for small approximation factors $\gamma\in(1,\sqrt{2})$ they were able to show a stronger lower bound of $\Omega(n^{\nicefrac{2}{(5\gamma^2-2)}})$ samples, which is $o(n^{\nicefrac23})$, even when the input is known to be a distribution with $H(\pbold)\geq \frac{5\log n}{10\gamma^2-4}$. This was later improved by Valiant \cite{Valiant2011} to $\Omega(n^{\nicefrac{1}{\gamma^2}-o(1)})$, which showed the original upper bound of Batu et al.\ to be essentially tight.

The intuition behind proving lower bounds is to notice that estimating the entropy to a suitable multiplicative factor can suffice to distinguish between a given pair of distributions $\pbold$ and $\tilde\pbold$. Recall that any $\gamma$-approximation algorithm must output an estimate $\tilde{H}$ such that
\begin{align}
    \frac{H(\pbold)}{\gamma} &\leq \tilde{H}(\pbold) \leq \gamma H(\pbold) \nonumber\\
    \frac{H(\tilde\pbold)}{\gamma} &\leq \tilde{H}(\tilde\pbold) \leq \gamma H(\tilde\pbold).
\end{align}
If the ratio of entropies is larger than $\gamma^2$, then we have that
\begin{equation}
    \tilde{H}(\tilde\pbold) \leq \gamma H(\tilde\pbold) \leq \frac{H(\pbold)}{\gamma} \leq \tilde{H}(\pbold),
\end{equation}
and so $\gamma$-estimating $H$ will allow us to distinguish $\pbold$ and $\tilde\pbold$.

In this section we prove lower bounds similar to those of \cite{Batu2002} but for the case of quantum algorithms that output a $\gamma$-multiplicative estimate of the Shannon entropy of a classical distribution. We prove our bounds for the quantum frequency vector model. Recalling that this model is capable of emulating both the purified access model and the model that quantumly queries a classical list of samples, these lower bounds carry over to both these models as well. The technique we use for this is a reduction from the collision problem to multiplicatively approximating entropy.

We also prove what is perhaps the first non-trivial lower bound for the vanilla quantum samples model by showing a reduction from the promise problem of distinguishing two classical distributions in Hellinger distance to $\gamma$-multiplicative approximation of entropy. 

\subsection{Distributions with non-zero entropy}
\label{sec:zero-ent-lowerbnd}
We first show that no algorithm working with input models (i)-(iv) that makes only polynomially many queries can estimate the Shannon entropy of all distributions over $[n]$ to multiplicative precision.
Consider $\pbold=(1-\epsilon, \frac{\epsilon}{n-1},\ldots, \frac{\epsilon}{n-1})$ and $\tilde{\pbold}=(1,0,\ldots,0)$. The Hellinger distance between $\pbold$ and $\tilde{\pbold}$ is given by
\begin{align}
    d_H(\pbold,\tilde{\pbold})&= \sqrt{\frac12\sum_i (\sqrt{p_i}-\sqrt{\tilde{p}_i})^2}\nonumber\\
        &=\sqrt{1-\sqrt{1-\epsilon}}.
\end{align}
The binomial theorem tells us that for $\Abs{\epsilon}\le 1$ and $\beta\in\R$,
\begin{align}
    (1-\epsilon)^{\beta} &= 1+\sum_{k=1}^{\infty} \frac{(\beta)(\beta-1)\ldots(\beta-k)}{k!}(-\epsilon)^k.
\end{align} 
Therefore $1-\epsilon<(1-\epsilon)^{\nicefrac12}<1-\nicefrac{\epsilon}{2}$, and we have 
\begin{equation}
\sqrt{\epsilon} \geq d_H(\pbold,\tilde{\pbold}) \geq \sqrt{\frac{\epsilon}{2}},
\end{equation}
i.e., $d_H(\pbold,\tilde{\pbold})=\Theta(\sqrt{\epsilon})$. 

We also have $H(\tilde{\pbold})=0$ and
\begin{align}
    H(\pbold) &= -(1-\epsilon)\log (1-\epsilon) - \epsilon \log \epsilon + \epsilon\log (n-1)\nonumber \\
        & = \Omega(\epsilon\log n),
\end{align}
since $h(\epsilon)\defeq -(1-\epsilon)\log(1-\epsilon) - \epsilon\log \epsilon \in [0,\log 2]$ is the binary entropy. Thus any algorithm that outputs an approximation for $H$ to a multiplicative factor $\gamma$ must output exactly $0$ on input $\tilde{\pbold}$ and at least $\frac{\epsilon}{\gamma}\log n$ on input $\pbold$. 

From \cite{Belovs2019}, we know that distinguishing $\pbold$ and $\tilde{\pbold}$ has query complexity 
\[
\Theta\left(\frac{1}{d_H(\pbold,\tilde{\pbold})}\right),
\]
in any of the four input models (i)-(iv). In particular, any algorithm requires $\Omega\left(\nicefrac{1}{\sqrt{\epsilon}}\right)$ queries to distinguish $\pbold$ from $\tilde{\pbold}$. Picking $\epsilon = n^{-k}$, for $\forall k>0$, shows that no $n^k$-query algorithm can distinguish $\pbold$ and $\tilde{\pbold}$, whence no such algorithm can output a good multiplicative approximation of the entropy for arbitrary input distributions.

\subsection{General sub-logarithmic lower bounds}
\label{sec:lb_qsample}
Leaning on the lower bound in \cite{Belovs2019} for the promise problem of distinguishing two probability distributions, we also get a weak lower bound on the query complexity of entropy estimation for any input model, and in particular, for the vanilla quantum samples model.

Consider the distributions $\pbold = (1-\epsilon, \frac{\epsilon}{n-1}, \ldots, \frac{\epsilon}{n-1})$ and $\tilde{\pbold} = (1-\epsilon, \epsilon, 0, \ldots, 0)$ with
\begin{align}
    H(\pbold) &= \Omega\left(\epsilon \log n \right)\nonumber\\
    H(\tilde\pbold) &= h(\epsilon) \leq \log 2,
\end{align}
so that the ratio of entropies is
\begin{align}
    \frac{H(\pbold)}{H(\tilde\pbold)} \geq 1+\frac{\epsilon\log (n-1)}{\log 2} = \Omega(\epsilon \log n).
\end{align}
The Hellinger distance between these two distributions is given by
\begin{align}
    d_H(\pbold,\tilde{\pbold})&= \sqrt{\frac12 \left(\sqrt{\frac{\epsilon}{n-1}}-\sqrt{\epsilon}\right)^2 + \sum_{i=3}^n \frac{\epsilon}{n-1}}\nonumber\\
        &=\sqrt{\epsilon\left(1-\frac{1}{\sqrt{n-1}}\right)}\nonumber\\
        & \leq \sqrt{\epsilon},
\end{align}
so that the inverse of the Hellinger distance is of order $\Omega(\nicefrac{1}{\sqrt{\epsilon}})$. If we now make the choice
\begin{equation}
    \epsilon = \frac{\gamma^2}{\log n},
\end{equation}
Belovs' query lower bound for distinguishing $\pbold$ and $\tilde\pbold$ translates into a
\begin{equation}
    \Omega\left(\frac{\sqrt{\log n}}{\gamma}\right)
\end{equation}
lower bound for $\gamma$-estimating $H(\pbold)$. 

\subsection{Polynomial lower bounds in the frequency vector and purified access models}
\label{sec:lb_freq}
For $\gamma>1$, consider the domain $[N]$ of size $N=n\cdot n^{1/\gamma^2}$, and consider the uniform distribution $\pbold$ on $[N]$, and a family of uniform distributions on subsets $S\subset[N]$ of size $|S|=n^{1/\gamma^2}$. 

For any such distribution $\tilde{\pbold}$, an input vector of length $N$ in the frequency vector input model represents an $r$-to-1 function $f:[N]\to S$, where $r=N/|S|=n$. On the other hand for inputs of this length $\pbold$ corresponds to a 1-to-1 function since each label in $[N]$ must occur exactly once in the input string.

Note then that the ratio of Shannon entropies is
\begin{equation}
    \frac{H(\pbold)}{H(\tilde{\pbold})} = \gamma^2 + 1 > \gamma^2,
\end{equation}
so that estimating $H$ to multiplicative precision $\gamma$ will enable us to distinguish the two distributions, and by extension, the two corresponding functions in the frequency vector input model. \cite{AaronsonShi04collisions} show that distinguishing a 1-to-1 function from an $r$-to-1 functions requires $\Omega\left((\nicefrac{N}{r})^{1/3}\right)$ queries to the input function oracle, where $N$ is the size of the domain of the functions. This for us translates to a lower bound of 
\[
       \Omega\left(\left(\frac{N}{r}\right)^{1/3}\right) = \Omega\left(n^{\nicefrac{1}{3\gamma^2}}\right).
\]

Recalling from \cref{sec:input-models} that any algorithm in the purified query access model implies an algorithm with the same complexity in the frequency vector model, and the fact that classical distributions are automatically examples of density matrices that are diagonal in the computational basis, we see that this lower bound applies to the estimation of both Shannon and von Neumann entropies, and to Models (i), (ii), and (iv).

\section{Conclusions and outlook}
In this paper, we initiated the investigation of quantum algorithms of sublinear query complexity for the task of $\gamma$-multiplicative approximation of both Shannon and von Neumann entropies. Our algorithm for probability distributions achieves a quadratic quantum speedup over classical algorithms, whilst our algorithm for mixed states indicates that it may be possible to estimate other global properties of quantum states with sublinear query complexity in their dimension.

Our results throw some light on the interesting question of the relation between the four input models discussed in \cref{sec:input-models}, which was first raised by \cite{Belovs2019}. In particular, the sub-logarithmic lower bound we obtain for the quantum samples model shows that there are problems that cannot be solved in this model with complexity independent of the dimension, where to our knowledge no such non-trivial problems were previously known for this model. It still remains open whether or not in the quantum samples model (which is strictly stronger than the general purified access models that we study), stronger speedups are possible for entropy estimation and similar tasks.

An immediate question left open by our work is to tighten the lower bounds we obtain, both for Shannon and von Neumann entropies. For the latter, the quantum polynomial method \cite{Bun2018} applied to the frequency vector model might yield better bounds. 
On the other hand, the strong intuition that quantum algorithms typically achieve quadratic speedups indicates that the upper bounds we obtain are tight up to polylogarithmic factors. A potential way to improve on these polylogarithmic factors may be to refine the approximation of $\log x$ by constructing functions such as $x^{2a}+x^{a}-x^{-a}-x^{-2a}$, reminiscent of symmetric Laurent series.

Our methods can also be extended to other information quantities such as Renyi and Tsallis entropies, and Kullback-Leibler and other divergence measures. Tight bounds on the complexity of multiplicative approximation of these quantities for both probability distributions and mixed states appear within reach, and we hope to report results in this direction in future work. 

We close by remarking that this line of work has close and interesting connections to distributional property testing, a rich and active field in classical complexity theory, offering exciting avenues for investigation in quantum complexity theory.


\subsection*{Acknowledgements}
We thank Tugkan Batu and Cl\'ement Canonne for insightful discussions. 

\printbibliography

\clearpage
\appendix
\section{Creating block encodings or projected unitary encodings from the purified access oracle}
\label{app:bloc-enc}
We give an overview of the methods used by \cite{Gilyen2019Distributional} to obtain projected unitary encodings from purified access oracles in this appendix. Recall that following \cite[Definition 43]{Gilyen2018QuantumArithmetics}, we defined
    an $(\alpha, a, \epsilon)$ projected unitary encoding of an operator $A$ acting on $s$ qubits is a unitary $U$ acting on $a+s$ qubits, such that 
    \begin{equation}
        \norm{A - \alpha\Pi^\dagger U\tilde{\Pi}} \leq \epsilon,
    \end{equation}
    where the first register consists of ancillary qubits, $\Pi$ and $\tilde{\Pi}$ represent projections, i.e.\ $\Pi:=\ket{0}^{\otimes a}\otimes\id_s$ is an isometry mapping $(\C^2)^{\otimes s}\mapsto\mathrm{span}_{\C}\{\ket{0}^{\otimes a}\}\otimes (\C^2)^{\otimes s}$, and $\alpha,\epsilon\in(0,\infty)$. Below, we recall how to obtain such block encodings from purified access oracles to classical probability distributions and mixed states.

\subsection{Classical distributions}
In the case of a classical input distribution, the purified access oracle in \cref{eq:purified-access-class} can be turned into a block encoding for a matrix with singular values equal to the $\sqrt{p_j}$ as follows.

We choose $\Pi:=\displaystyle\sum_{i\in[n]}\id\otimes\ketbra{i}{i}\otimes\ketbra{i}{i}$, and $\tilde{\Pi}:=\ketbra{0}{0}\otimes\ketbra{0}{0}\otimes\id$, where each of the three registers is of dimension $n$. With $W=U\otimes\id$, we have that
\begin{equation}
    P = \Pi U \tilde{\Pi} = \sum_{i\in[n]} \sqrt{p_i}\ketbra{\phi_i ii}{00i}.
\end{equation}
The right hand side above represents the singular value decomposition (SVD) of a matrix $P$ with singular values $\sigma_i=\sqrt{p_i}$, left singular vectors $\ket{\phi_i i i}$ and right singular vectors $\ket{00i}$, i.e.
\begin{align*}
    P\ket{00i} &= \sqrt{p_i}\ket{\phi_i i i}\nonumber\\
    P^{\dagger}\ket{\phi_i i i } &= \sqrt{p_i}\ket{00i}.
\end{align*}
Hence we see that $U$ furnishes a $(1,\lceil\log n\rceil, 0)$ block encoding of $P$. 

\subsection{Arbitrary quantum density matrices}
Classical probability distributions correspond to the special case when $\rho$ is diagonal in the computational basis. For arbitrary density matrices, it is a little bit harder to create a projected unitary encoding that has the square roots of the eigenvalues of $\rho$, $\sqrt{p_i}$, as the singular values. Instead, \cite{Gilyen2019Distributional} give a construction which has singular values $\sqrt{\frac{p_i}{n}}$. 

Since we do not know the eigenbasis of $\rho$ beforehand, we define the projection operators $\Pi=\id\otimes\proj{0}\otimes\proj{0}$ and $\tilde{\Pi}=\proj{0}\otimes\proj{0}\otimes\id$. As before, the third register contains the ancilla, and the first two contain states of the system. We also need a unitary map $W$ that prepares the maximally entangled state on the two copies of the system register: 
\[
    \ket{0}\ket0\longmapsto\frac{1}{\sqrt{n}}\sum_{j=1}^n\ket j\ket j.
\] 
With these operators and the purified access oracle $U_\rho$ of \cref{eq:purified-access-quant}, we can define the unitary
\begin{equation*}
    U' = \left(\id\otimes U_\rho^{\dagger}\right)\left(W\otimes\id\right),
\end{equation*}
which gives rise to the following projected unitary encoding
\begin{equation*}
    \label{eq:gilyen_encoding}
    \Pi U'\tilde{\Pi} = \frac{1}{\sqrt{n}}\sum_{j=1}^n\ketbra{\phi_j'}{0}\otimes\ketbra{0}{0}\otimes\ketbra{0}{\psi_j},
\end{equation*}
where $\{\ket{\phi_j'}\}$ is a Schmidt basis for the first half of the bipartite maximally entangled state of dimension $n$.

This encoding has a leading normalisation factor of $\frac{1}{\sqrt{n}}$, which directly contributes a factor of $\sqrt n$ to the complexity of entropy estimation for quantum density matrices. 

\section{Implementing power functions of block encoded matrices}
\label{app:pow-funcs}
Polynomial functions of a matrix are defined through its singular value decomposition. An $n\times n$ matrix $P$ has $n$ real singular values $\sigma_j$, with a singular value decomposition in terms of its left and right singular vectors $\ket{v_j}$ and $\ket{w_j}$. An even or odd polynomial function $f$ of such a matrix is then defined as having the same singular vectors, but with the eigenvalues $f(\sigma_j)$, as follows
\begin{align}
\label{eqn:matrixFnDefn}
P&=\sum_{j=1}^n \lambda_j \ket{v_j}\bra{w_j};\nonumber\\
f(P)&=\begin{cases}
        \sum_{j=1}^n f(\sigma_j)\ket{w_j}\bra{w_j} & \text{when } f(x)=f(-x)\\
        \sum_{j=1}^n f(\sigma_j)\ket{v_j}\bra{w_j} & \text{when } f(x)=-f(-x)\\
    \end{cases}
\end{align}

The key trick in using QAE to estimate a functional $\varphi(\pbold):=\sum_i f(p_i)$ of an input vector $\pbold$ (in our case, a probability mass function) is to use the input purified access unitary that performs the map
\begin{equation*}
    U\ket{0^d}\ket{0^d} = \ket{\uppsi_{\pbold}} = \sum_{i=1}^n\sqrt{p_i}\ket{\phi_i}\ket{\psi_i}
\end{equation*}
where $d=\lceil\log n\rceil$, to construct a new unitary circuit which on a chosen, easy to prepare initial state performs a map of the type
\begin{equation*}
    W\ket{\uppsi_{\rho}}\ket{0}_{\text{flag}} =  \sum_{i=1}^n\sqrt{f(p_i)}\ket{\phi_i}\ket{\psi_i}\ket{0}_{\text{flag}} + \ket{\text{...}}\ket{1}_{\text{flag}},
\end{equation*}
so that value of the target functional $\varphi$ is encoded in the amplitude of the part of the output state marked by the $\ket0$ subspace of the flag register.

A projected unitary encoding $U$ of a matrix $P$ can be used to implement such smooth functions of the input matrix via polynomial approximations, with the following theorem.

\begin{theorem}[Theorem 56, \cite{Gilyen2018QuantumArithmetics}]
\label{thm:QSVT}
    Given an $(\alpha, a, \epsilon)$ block encoding $U$ of a Hermitian matrix $P$, for any degree $m$ polynomial $f(x)$ that satisfies $\forall~x\in[-1,1],~|f(x)|<\nicefrac12$, there exists a $(1, a+2, 4m\sqrt{\nicefrac{\epsilon}{\alpha}}+\delta)$ block encoding $U_f$ of $f(P/\alpha)$.  We can construct $U_p$ using $m$ applications of $U$ and $U^{\dagger}$, a single application of controlled-$U$, and $\O((a+1)m)$ additional $1$- and $2$-qubit gates. A description of the circuit of $U_f$ can be calculated in $\O(\textnormal{poly}(m, \log\nicefrac{1}{\delta}))$ time on a classical computer.
\end{theorem}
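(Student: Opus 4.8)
The plan is to follow the standard two-step route to the quantum singular value transformation: \emph{qubitization}, which reduces the task to a single-qubit problem, followed by \emph{quantum signal processing} (QSP), which realises the polynomial. First I would normalise the statement: the hypothesis $\norm{P-\alpha\Pi^\dagger U\tilde\Pi}\le\epsilon$ says that, setting $A:=\Pi^\dagger U\tilde\Pi$, we have $\norm{A-\nicefrac{P}{\alpha}}\le\nicefrac{\epsilon}{\alpha}$. It then suffices to (i) build, out of $U$ and the two projector-reflections, an \emph{exact} $(1,a+2,0)$-block encoding of the singular value transform $f^{SV}(A)$, and then (ii) establish the perturbation bound $\norm{f^{SV}(A)-f^{SV}(\nicefrac{P}{\alpha})}\le 4m\sqrt{\nicefrac{\epsilon}{\alpha}}$; the extra additive $\delta$ will be the price of computing the QSP phase sequence classically to finite precision.

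For step (i) the structural backbone is Jordan's lemma applied to the pair of reflections $R:=2\Pi\Pi^\dagger-\id$ and $\tilde R:=2\tilde\Pi\tilde\Pi^\dagger-\id$ (each realisable with $\O(a)$ gates as a projector-controlled phase). The ambient space decomposes into subspaces, invariant under $U$ and both reflections, of dimension at most two; each two-dimensional block is labelled by a singular value $\sigma_j\in(0,1)$ of $A$, and on it $U$ acts in the singular-vector basis as a rotation with $\cos\theta_j=\sigma_j$. One then checks that the interleaved product of $m$ alternating copies of $U$ and $U^\dagger$, sandwiched between the two reflections dressed with single-qubit phase rotations $e^{i\phi_k Z}$ on the flag register, restricts on each such block to exactly the QSP unitary in the variable $\sigma_j$ with phase list $\phi_1,\dots,\phi_m$. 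This is precisely the statement that a single circuit transforms all singular values simultaneously and coherently.

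The phases are supplied by the QSP completeness theorem: for a real polynomial $f$ of degree $\le m$ with the parity of $m$ and $\Abs{f(x)}\le\nicefrac12$ on $[-1,1]$, there is a complex polynomial $P$ with $\mathrm{Re}\,P=f$, $\deg P\le m$, and $\Abs{P(x)}\le1$ on $[-1,1]$ --- its existence reducing, via the Fej\'er--Riesz lemma, to producing a suitable complementary polynomial --- and an inductive ``phase-peeling'' argument then extracts angles $\phi_1,\dots,\phi_m$ whose bare QSP sequence carries $P$ in its top-left matrix element. A single controlled-$U$ together with one extra ancilla then performs a Hadamard-test-style linear combination that projects out the real part, upgrading the $P^{SV}$ circuit to a $(1,a+2,0)$-block encoding of $f^{SV}(A)$; this is what fixes the $a+2$ ancillas and the $\nicefrac12$ slack in the hypothesis. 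Since the one angle list acts uniformly across every Jordan block (and acts trivially on the one-dimensional blocks, where $\sigma\in\{0,1\}$), the resources read off directly: $m$ uses of $U$ and $U^\dagger$, one controlled-$U$, and $\O((a+1)m)$ further one- and two-qubit gates for the dressed reflections, with the phase list itself computable in $\poly(m,\log\nicefrac{1}{\delta})$ classical time by the known numerically stable factorisation algorithms --- the source of the additive $\delta$.

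Finally, step (ii) is the robustness estimate. The cleanest route is a perturbation lemma for singular value transforms of \emph{bounded} polynomials: if $\norm{A-B}\le t$ then $\norm{f^{SV}(A)-f^{SV}(B)} = \O(m\sqrt{t}\,)$, the square root reflecting that the reparametrisation $\sigma\mapsto\arccos\sigma$ has derivative of size $(1-\sigma^2)^{-\nicefrac12}$ near the endpoints; combining this with a Markov--Bernstein-type bound on $\norm{f'}_\infty$ in terms of $m$, and setting $t=\nicefrac{\epsilon}{\alpha}$, yields the constant $4$. I expect this robustness bound --- together with cleanly handling the degenerate ($\sigma\in\{0,1\}$) and odd-parity cases in the identification of Jordan blocks with QSP sequences --- to be the main technical obstacle; the Fej\'er--Riesz and phase-extraction core, although it is where the substantive content lies, is by now routine.
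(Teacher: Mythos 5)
This statement is imported verbatim from the cited reference (Theorem 56 of \cite{Gilyen2018QuantumArithmetics}); the paper itself gives no proof, and your sketch follows essentially the same route as that reference's own argument: Jordan-block qubitization of the projected unitary, QSP phase existence for bounded real polynomials via a complementary-polynomial construction, real-part extraction using one extra ancilla and the single controlled-$U$ (which accounts for the $\nicefrac12$ bound and the $a+2$ ancillas), and a robustness bound producing the $4m\sqrt{\nicefrac{\epsilon}{\alpha}}$ term. The one caveat is your step (ii): since $\Pi^\dagger U\tilde\Pi$ and $\nicefrac{P}{\alpha}$ need not share singular bases, the scalar Bernstein-plus-$\arccos$ H\"older estimate does not apply directly, and you would need the matrix-level robustness lemma of the cited reference (their Lemma 23) rather than a Markov--Bernstein derivative bound to obtain the stated constant.
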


Using Theorem \ref{thm:QSVT}, we can implement $\epsilon$-approximate block encodings of power functions $P^c$ on the part of the singular value spectrum of $P$ that is contained in $[\delta,1]$ for $\delta>0$ by using polynomial approximations. The lower cutoff $\delta$ is necessary because power functions for non-integer exponents $c\in\R$ are not differentiable at $x=0$. On the other hand, monomials for $c=1,2,\ldots$ can be implemented exactly on the entire domain $[0,1]$.


We first note the following way \cite{Chakraborty2018,Gilyen2018QuantumArithmetics} of obtaining polynomial approximations of any desired degree for positive and negative power functions over a domain $[x_0-r-\delta, x_0+r+\delta]$ of radius $r\in(0,2]$ centred around a point $x_0\in[-1,1]$, with some wiggle room for the polynomial to vary, specified by the parameter $\delta\in(0,r]$.

\parahead{Positive Power functions: } Consider $f(x)=x^{c}$ for $c>0$. The Taylor series expansion of $f$ around $x_0=1$ 
\begin{align}
    f(1+x) &= (1+x)^c\nonumber\\
        &=1+\sum_{k=1}^\infty \binom{c}{k}x^k
\end{align}
converges $\forall x\in[-1,1]$, where
\[
    \binom{c}{k} := \frac{c(c-1)(c-2)\ldots(c-k+1)}{k!}.
\]
Notice that we have
\begin{align*}
    1+\sum_{k=1}^\infty \Abs{\binom{c}{k}}(1-\delta + \delta)^k &= 1+\sum_{k=1}^\infty \Abs{\binom{c}{k}}\\
        &=1-\sum_{k=1}^\infty \binom{c}{k}(-1)^k\\
        &=2-\sum_{k=0}^\infty \binom{c}{k}(-1)^k\\
        &=2-f(1-1)\\
        &=2.
\end{align*}
Specific to our purpose, this means we can choose $x_0=1$, $r=1-\sqrt{\beta}/2$, $\delta=\sqrt{\beta}/2$, and the normalisation factor $B=2$ for implementing positive power functions of block encodings of $P$ with singular values $\pbold=(\sqrt{p_1},\ldots,\sqrt{p_n})$ on the domain $[\sqrt{\beta},1]$ corresponding to the heavy elements $i\in B_\beta$ with probability masses $p_i\geq\beta$. 

\parahead{Negative Power functions: }Consider $f(x)=x^{-c}$ for $c>0$. The Taylor series expansion of $f$ around $x_0=1$ 
\begin{align}
    f(1+x) &= (1+x)^{-c}\nonumber\\
        &=1+\sum_{k=1}^\infty \binom{-c}{k}x^k
\end{align}
converges $\forall x\in[-1,1]$, where
\[
    \binom{-c}{k} := \frac{-c(-c-1)(-c-2)\ldots(-c-k+1)}{k!}.
\]
With $\delta':=\frac{\delta}{2\max(1,c)}$ notice that we have
\begin{align*}
    1+\sum_{k=1}^\infty \Abs{\binom{-c}{k}}(r + \delta')^k &= 1+\sum_{k=1}^\infty \binom{-c}{k}(-r - \delta')^k\\
        &=(1-r-\delta')^{-c}\\
        &=(\delta-\delta')^{-c}\\
        &=\delta^{-c}(1-\frac{\delta'}{\delta})^{-c}\\
        &=\delta^{-c}(1-\frac{1}{2\max(1,c)})^{-c}\\
        &=2\delta^{-c}.
\end{align*}
If we choose to normalise the original function to $\frac{\delta^{c}}{2}x^{-c}$, the above calculation shows that we can choose $x_0=1$, $r=1-\sqrt{\beta}/2$, $\delta=\sqrt{\beta}/2$, and the normalisation factor $B=1$ for implementing negative power functions over $[\sqrt{\beta},1]$. Since the case of negative power functions is more illustrative than positive ones, we state this formally below. A similar statement holds for the positive case.

\begin{lemma}[Corollary 67, \cite{Gilyen2018QuantumArithmetics}]
\label{lem:powerfunctions}
    Given a $(\alpha, a, 0)$ unitary block encoding $U$ of a matrix with singular value decomposition $P=\sum_i \sigma_i \ketbra{v}{w}$, and an even polynomial $f_c(x)$ on $[-1,1]$ that approximates the negative power function $x^{-c}$ for $c>0$ such that 
    \begin{align*}
        \Abs{f_c(x)-\frac{\delta^c}{2}x^{-c}}~~&\leq~~ \epsilon &\forall x\in[\delta,1]\\
        \Abs{f_c(x)}~~&\leq~~ 1 &\forall x\in[-1,1]\\
        m=\deg{f} &= \O\left(\frac{\max(1,c)}{\delta}\log \frac 1\epsilon\right),
    \end{align*}
    we can implement a $(\nicefrac{2}{\delta^c}, a+2, \epsilon)$ block encoding $U_f$ of the matrix polynomial $f_c(P)=\sum_i f_c(\sigma_i) \ketbra{w}{w}$ using m  
    applications of $U$ and $U^{\dagger}$, a single application of controlled-$U$, and $\O(ma)$ additional one- and two-qubit gates. Furthermore, a description of the quantum circuit $U_f$ can be computed classically in time $\O(\poly(m,\log\nicefrac{1}{\epsilon}))$.
\end{lemma}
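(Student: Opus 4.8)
\textit{Proof proposal.}
This statement is essentially an application of the QSVT theorem (\cref{thm:QSVT}) to the polynomial $f_c$, which is already supplied as a hypothesis; the only real content is in the resource accounting and the normalization bookkeeping. I would first record the shape of the target operator: since $f_c$ is an even polynomial, the definition \eqref{eqn:matrixFnDefn} of a matrix function of a block-encoded operator gives $f_c(P)=\sum_i f_c(\sigma_i)\ketbra{w}{w}$, which is exactly the operator named in the statement and precisely the object that QSVT for even-parity polynomials produces.

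The key move is to feed QSVT not $f_c$ itself but the rescaled polynomial $g:=\tfrac{\delta^c}{2}f_c$. This $g$ has the same degree $m$ and the same (even) parity, and it satisfies $|g(x)|\le\tfrac{\delta^c}{2}\le\tfrac12$ on $[-1,1]$ (strictly whenever $\delta<1$, the case $\delta=1$ being trivial), so it meets the precondition of \cref{thm:QSVT}; the rescaling factor $\delta^c/2$ therefore does double duty, both meeting that precondition and producing exactly the normalization $2/\delta^c$ claimed in the conclusion. Because the input block encoding is exact — it is $(\alpha,a,0)$ — the error term $4m\sqrt{\varepsilon_{\mathrm{in}}/\alpha}$ of \cref{thm:QSVT} vanishes, so QSVT returns a $(1,a+2,0)$ block encoding $U_f$ whose top-left block is exactly $g$ applied to the singular spectrum, i.e.\ $\tfrac{\delta^c}{2}f_c(P)$ (taking $\alpha=1$ for the moment; see below). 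By the definition of projected unitary encodings, this same $U_f$ is verbatim a $\bigl(\tfrac{2}{\delta^c},a+2,\cdot\bigr)$ block encoding of $f_c(P)$, and the block-encoding error is inherited solely from the first hypothesis on $f_c$: since $\tfrac{\delta^c}{2}f_c(P)$ and $\tfrac{\delta^c}{2}P^{-c}$ are simultaneously diagonalised in the singular basis, the operator-norm gap between them is $\max_i|f_c(\sigma_i)-\tfrac{\delta^c}{2}\sigma_i^{-c}|\le\epsilon$ on the relevant spectrum, giving the stated $\epsilon$. The query and gate counts — $m$ uses of $U$ and $U^\dagger$, one controlled-$U$, $\O((a+1)m)=\O(ma)$ additional one- and two-qubit gates, and $\O(\poly(m,\log\tfrac1\epsilon))$ classical preprocessing for the phase sequence — are then read off directly from \cref{thm:QSVT}, with $m=\deg f_c=\O\!\left(\tfrac{\max(1,c)}{\delta}\log\tfrac1\epsilon\right)$ by the third hypothesis.

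I do not expect a genuine obstacle here; the only place one can slip is the normalization and domain bookkeeping. Two points deserve care: (i) keeping the $\delta^c/2$ factor consistent so that the output normalization comes out as exactly $2/\delta^c$ (rather than $4/\delta^c$ or $\delta^{2c}/4$), which is why I prefer to rescale by $\delta^c/2$ at the outset instead of by $\tfrac12$ and patching afterwards; and (ii) the input normalization $\alpha$ — QSVT natively realises $g(P/\alpha)$, not $g(P)$, so for $\alpha>1$ the polynomial must in fact approximate $x^{-c}$ on $[\delta/\alpha,1]$, which multiplies the degree (hence the query count) by a factor $\alpha$ and rescales the encoded power function by an extra $\alpha^c$. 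This is exactly the source of the $\alpha$ overhead that surfaces downstream in \cref{thm:main} (and of the $\sqrt n$ factor for density matrices). Finally, if one wants the lemma self-contained rather than taking $f_c$ as given, the required polynomial is the one built in the paragraph preceding the statement: the truncated, shifted Taylor series of $(1+y)^{-c}$ about $y=0$ with the standard $\delta'$-wiggle-room of \cite{Gilyen2018QuantumArithmetics}; the identity $1+\sum_{k\ge1}\bigl|\binom{-c}{k}\bigr|(r+\delta')^k=2\delta^{-c}$ simultaneously yields the bound $|f_c|\le1$ after normalising by $\delta^c/2$ and, via geometric decay of the Taylor tail evaluated at arguments at most $r<r+\delta'$, the degree bound $m=\O\!\left(\tfrac{\max(1,c)}{\delta}\log\tfrac1\epsilon\right)$.
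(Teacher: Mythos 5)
Your proposal is correct and takes essentially the same route as the paper, which does not reprove this statement but imports it as Corollary~67 of \cite{Gilyen2018QuantumArithmetics}, supplying exactly the ingredients you assemble: \cref{thm:QSVT} for implementing the (suitably rescaled) definite-parity polynomial, the Taylor-series construction with the $\delta^{c}/2$ normalisation worked out in the paragraphs immediately preceding the lemma, and the treatment of the input normalisation $\alpha$ that matches the paper's own discussion in \cref{sec:heavywt}. One cosmetic slip worth fixing: with an exact input encoding the residual error of \cref{thm:QSVT} is its angle-computation term $\delta$ (made at most $\epsilon$ at polylogarithmic classical cost), not literally $0$, so as an encoding of $f_c(P)$ the final $\epsilon$ comes from there rather than from the approximation hypothesis on $f_c$ --- that hypothesis only enters if one reads the lemma as encoding $x^{-c}$ itself, in which case the unnormalised error acquires a $2/\delta^{c}$ factor that must be absorbed into the choice of polynomial precision (and hence the degree), exactly as in the cited corollary.
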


\section{Quantum phase estimation and singular value estimation}
\label{app:phase-est}

We would like to use Quantum Phase Estimation as a subroutine to separately flag the subspaces of heavy and light elements. In essence we want to perform the map in \cref{eq:qpe-qsve}, i.e.\
\begin{equation}
    \sum_{i\in[n]} \sqrt{p_i}\ket{\phi_i}\ket{i}\otimes \ket{0^{m}} \mapsto \sum_{i\in[n]} \sqrt{p_i}\ket{\phi_i}\ket{i} \ket{q_i},
\end{equation}
where $\Abs{\sqrt{p_i}-q_i}\leq 2^{-(m+1)} =: \epsilon$, and $m$ is the number of bits of precision.

Recall that we have a block encoding $U$ of a matrix $P$ that represents our input distribution, where $P$ has the singular value decomposition
\begin{equation}
    P = \tilde{\Pi}U\Pi = \sum_{i\in[n]} \sqrt{p_i} \ketbra{\phi_i i i}{00i}.
\end{equation}

Functions of $P$ defined by even or odd polynomials $f$ or $\tilde{f}$ respectively acting on the singular values then have the form
\begin{align}
    f(P) &:=  \sum_{i\in[n]} f(\sqrt{p_i}) \ketbra{00i}{00i}\nonumber\\
    f(P) &:=  \sum_{i\in[n]} f(\sqrt{p_i}) \ketbra{\phi_i i i}{00i}.
\end{align}

In principle, we can simply use the standard textbook version of the quantum phase estimation algorithm (QPE) which requires controlled-$U$ operators and the quantum fourier transform (QFT) in order to estimate the phases $\theta_j\in[0,1)$ of the eigenvalues $\lambda_j=\ee^{2\pi \ii\theta_j}$ of $U$. We consider $e^{2\pi \ii Pt}$ as the input unitary, which can be implemented using the block encoding of $P$ via Hamiltonian simulation, with query complexity to $U$ and $U^{\dagger}$ bounded by 
\[
    \O\left(t+\frac{\log\nicefrac{1}{\epsilon}}{\log\log\nicefrac{1}{\epsilon}}\right)
\]
where $\epsilon$ is a precision parameter defined by $\forall j\in[n],~\Abs{\ee^{2\pi\ii\sqrt{p_j}} - \lambda_j}\leq \epsilon$ \cite{Low2016,Gilyen2018QuantumArithmetics}. This condition translates to $\Abs{\theta_j - \sqrt{p_j}}\leq \frac{1}{\pi}\arcsin\frac{\sqrt{\epsilon}}{2}\leq \sqrt{\epsilon}$. This error is benign as far as we are concerned: we can choose it to be of order $\nicefrac{1}{n}$ or even $\nicefrac{1}{n^2}$ while incurring only a additive logarithmic overhead in the complexity. Indeed, we shall choose it to be inverse polynomial in $n$, so that the subsequent step that uses QPE to estimate $\theta_j$ still behaves as we expect it to --- it will produce an estimate of zero whenever $\theta^2_j\leq\beta \iff {p_j}<\beta$.

While the above explanation is the high level intuition, in practice things can be a bit more delicate, and we use technique of quantum singular value estimation (QSVE). Since $P$ is not Hermitian, we consider a symmetrised version of it defined by $\hat{P}=\ketbra{0}{1}\otimes P + \ketbra{1}{0}\otimes P^{\dagger}$, which has eigenvectors $\ket{0}\otimes\ket{\phi_i i i} + \ket{1}\otimes\ket{00i}$ and eigenvalues $\sqrt{p_i}$. The problem then is to perform the map in \cref{eq:qpe-qsve} using the block encoding of $\hat{P}$, which in turn can easily be constructed using the block encoding of $P$. This matches the problem addressed in \cite{Kerenidis2017,Chakraborty2018,Gilyen2018QuantumArithmetics}, and the complexity is essentially $\widetilde{O}\left(\nicefrac{1}{\epsilon}\right)$ where $\epsilon$ is the precision to which we would like to estimate the singular values. 

We can then choose $m=\log\sqrt{\nicefrac{1}{\beta}}$, and the query complexity of the QSVE subroutine becomes 
\[
    \O\left(\frac{1}{\epsilon}\right) = \O\left(\frac{1}{\sqrt{\beta}}\right).
\]

\section{Quantum amplitude estimation}
\label{app:amp-est}
Quantum amplitude estimation (QAE) is a technique wherein quantum phase estimation (QPE) is used to estimate the amplitude of a certain basis state (more generally, of any state about which we can perform a reflection operation) in a superposition produced by applying a unitary operation $U$ to a given input state. The QPE algorithm is applied to estimate the eigenvalues of the Grover iterate constructed from the input unitary. We also note that the most modern methods of QAE do not rely on QPE or the quantum fourier transform in an essential way \cite{SuzukiQAE-QFT,AaronsonQAE-QFT}.
\begin{theorem}[\cite{Brassard2002}, Theorem 12]
    Given a unitary $U$ with the action
    \begin{equation*}
        U\ket 0\ket 0 = \sqrt{p}\ket 0 \ket\phi + \ket{\perp},
    \end{equation*}
    where $\ket\phi$ is a normalised state on the system register, and $\left(\proj{0}\otimes\id\right)\ket{\perp}=0$, the quantum amplitude estimation algorithm outputs $\tilde p\in[0,1]$ satisfying
    \begin{equation*}
        \Abs{p-\tilde p}\leq \frac{2\pi\sqrt{p(1-p)}}{M} + \frac{\pi^2}{M^2},
    \end{equation*}
    with a success probability at least $8/\pi^2$, making $M$ uses of $U$ and $U^\dagger$. 
\end{theorem}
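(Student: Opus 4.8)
The plan is to run the standard amplitude-amplification-plus-phase-estimation argument of Brassard et al. First I would isolate a two-dimensional invariant subspace. Assuming for now $0<p<1$, write $U\ket0\ket0 = \sqrt p\,\ket{G} + \ket\perp$ with $\ket G := \ket0\ket\phi$ and $\ket B := \ket\perp/\norm{\ket\perp}$, so that $U\ket0\ket0 = \sin\theta\,\ket G + \cos\theta\,\ket B$ for the unique $\theta\in(0,\pi/2)$ with $\sin^2\theta=p$. Define the oracle reflection $S_\chi := \id - 2\left(\proj{0}\otimes\id\right)$, which flips the phase of any state carrying $\ket0$ in the first register, and $S_0 := \id - 2\left(\proj{0}\otimes\proj{0}\right)$, and form the Grover iterate $Q := -U S_0 U^\dagger S_\chi$.

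Next I would verify the elementary two-by-two fact that $\mathcal H_\Psi := \spn\{\ket G,\ket B\}$ is invariant under $Q$ and that $Q$ acts on $\mathcal H_\Psi$ as a rotation by angle $2\theta$; consequently $Q$ restricted to $\mathcal H_\Psi$ has eigenvalues $e^{\pm 2\ii\theta}$ with eigenvectors $\ket{\psi_\pm}$, and the input state satisfies $U\ket0\ket0 = \tfrac{-\ii}{\sqrt2}\!\left(e^{\ii\theta}\ket{\psi_+} - e^{-\ii\theta}\ket{\psi_-}\right)\in\mathcal H_\Psi$. This reduces to checking how $S_\chi$ and $US_0U^\dagger$ act on $\ket G$ and $\ket B$ and is pure bookkeeping.

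Then I would apply quantum phase estimation to $Q$ using $M$ (controlled) applications of $Q$ — hence $M$ uses of $U$ and $U^\dagger$ up to a constant — with $U\ket0\ket0$ as the input eigenstate superposition. Since this state is an equal-weight combination of $\ket{\psi_\pm}$, whose eigenphases are $\pm\theta/\pi$ modulo $1$, the standard phase-estimation accuracy guarantee yields, with probability at least $8/\pi^2$, an outcome $y/M$ with $\Abs{y/M - \theta/\pi}\le 1/M$, i.e.\ an estimate $\tilde\theta$ with $\Abs{\tilde\theta-\theta}\le\pi/M$; the twofold $\theta\mapsto\pi-\theta$ ambiguity is harmless because I output $\tilde p := \sin^2\tilde\theta$ and $\sin^2$ is invariant under it. Finally I would convert the angular error into an amplitude error via $\sin^2 A - \sin^2 B = \sin(A+B)\sin(A-B)$: writing $\tilde\theta = \theta + \delta$ with $\Abs\delta\le\pi/M$,
\begin{align*}
    \Abs{\tilde p - p} = \Abs{\sin(2\theta+\delta)}\,\Abs{\sin\delta} \le \left(\Abs{\sin 2\theta} + \Abs{\sin\delta}\right)\Abs\delta \le \frac{2\pi\sqrt{p(1-p)}}{M} + \frac{\pi^2}{M^2},
\end{align*}
using $\Abs{\sin 2\theta} = 2\sqrt{p(1-p)}$ and $\Abs{\sin\delta}\le\Abs\delta\le\pi/M$.

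The part requiring the most care will be the phase-estimation step: pinning down the exact $8/\pi^2$ success probability (it is the worst-case probability that the measured register value is within distance $1$ of $M\theta/\pi$ when that quantity is not an integer), handling the degenerate eigenphase structure of the input cleanly, and disposing separately of the boundary cases $p\in\{0,1\}$ where $\mathcal H_\Psi$ collapses to one dimension. Everything else is routine linear algebra and trigonometry.
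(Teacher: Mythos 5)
Your proposal is correct and is essentially the canonical argument behind the cited result: the paper itself does not reprove this theorem but quotes it from Brassard et al., whose proof proceeds exactly as you describe (two-dimensional invariant subspace, Grover iterate acting as a rotation by $2\theta$, phase estimation on the eigenphases $\pm\theta/\pi$ with the standard $8/\pi^2$ guarantee, and the $\sin^2$ difference identity to convert angular error into the stated bound on $\Abs{p-\tilde p}$). No gaps beyond the boundary cases $p\in\{0,1\}$ and the exact phase-estimation probability analysis, which you correctly flag as the parts needing care.
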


To get an approximation $\tilde p$ that is correct to a constant additive precision $\epsilon\in(0,1/2)$, we can choose $M=\lceil 2\pi\left(\frac{2\sqrt{p}}{\epsilon} + \frac{1}{\sqrt{\epsilon}}\right) \rceil = \Theta\left(\frac{\sqrt{p}}{\epsilon} + \frac{1}{\sqrt{\epsilon}}\right)$, and hence with a complexity of $\Theta(1/\epsilon)$ we can estimate $p$ to additive precision $\epsilon$.

\end{document}